\documentclass[conference,10pt,letterpaper]{IEEEtran}

\usepackage[letterpaper, top=0.5in, bottom=0.58in, left=0.75in, right=0.75in]{geometry}

%%%%packages from arxiv version
%%%%%%%%%%%%%%%%%%%%%%%%%%%%%%%%%%%%
\usepackage[english]{babel}
\usepackage{amsthm}
\usepackage{enumitem}
\usepackage{amsmath}
\usepackage[english]{babel}
\usepackage{amssymb}
\usepackage{mathtools}

\usepackage{enumerate}
\usepackage{dutchcal}
\usepackage{tabularx}
\usepackage{mathptmx}
\DeclareMathAlphabet{\mathcal}{OMS}{cmsy}{m}{n}  
\usepackage{cite} % takes care of citations
\usepackage{graphicx} %insert pictures
\usepackage{float} %
\usepackage{setspace}
\usepackage[justification=centering]{caption}
	
\usepackage{authblk}
%%%%%%%%%%%%%%%%%%%%%%%%%%%%%%%%%%%%%
\usepackage[table]{xcolor}
\usepackage{times}
\usepackage{epsfig}
\usepackage{amsfonts}
\usepackage{graphicx}
\usepackage{amstext}
\usepackage{latexsym}
\usepackage{color,colortbl}
\usepackage{ifthen}
\usepackage{multirow}
\usepackage{verbatim}
\usepackage{array,tabularx}
\usepackage{arydshln}
\usepackage[mathscr]{euscript}
\usepackage{accents}
 \usepackage{cite}
\usepackage{hhline}
\usepackage[singlelinecheck=false,justification=justified]{caption}
\usepackage{subcaption}
\usepackage{enumerate}
\usepackage{mathtools}
\usepackage{url}
\usepackage{xparse}
\usepackage{makecell}
\usepackage{varwidth}
\usepackage{bm}
\usepackage{arydshln}
\usepackage{comment}
\usepackage{wrapfig}
\usepackage{float}
\usepackage{booktabs}
\usepackage{dirtytalk}
\usepackage{mathtools}
\usepackage{booktabs}
\usepackage{float}
\usepackage{multirow}
\usepackage{balance}
%% Algorithm
\usepackage[ruled,vlined]{algorithm2e}

%table
\newcolumntype{C}[1]{>{\centering\let\newline\\\arraybackslash\hspace{0pt}}m{#1}}

\newcommand{\dist}{\operatorname{dist}} 
 
\usepackage[normalem]{ulem}
\usepackage{amsmath,pgfplots,amssymb}
\usepackage{graphicx}
\newtheorem{theorem}{Theorem}

\newtheorem{corollary}{Corollary}
\theoremstyle{definition}
\newtheorem{example}{Example}
\newtheorem{definition}{Definition}
\theoremstyle{definition}

\theoremstyle{definition}

\newcommand{\interior}[1]{%
  {\kern0pt#1}^{\mathrm{o}}%
}

\DeclareMathOperator{\sym}{Sym}

\usetikzlibrary{matrix,decorations.pathreplacing}

\newcommand{\restrictedto}[1]{|_{#1}}

%\pgfplotsset{compat=1.18}

\definecolor{myblue1}{rgb}{0.01, 0.28, 1.0}
\definecolor{myblue2}{rgb}{0.1, 0.1, 0.44}
\definecolor{myblue3}{rgb}{0.15, 0.23, 0.89}
\definecolor{myblue4}{rgb}{0.0, 0.22, 0.66}
\definecolor{myblue5}{rgb}{0.06, 0.75, 0.99}

\definecolor{myred1}{rgb}{0.89, 0.02, 0.17}
\definecolor{myred2}{rgb}{0.9, 0.13, 0.13}

\definecolor{mygreen1}{rgb}{0.0, 0.65, 0.31}
\definecolor{mygreen2}{rgb}{0.3, 0.73, 0.09}
\definecolor{mygreen3}{rgb}{0.2, 0.9, 0.2}

\makeatletter
\renewcommand{\fnum@figure}{Fig. \thefigure}
\makeatother

\usepackage{environ}
\NewEnviron{killcontents}{}

\usepackage{nidanfloat}

\usepackage[utf8]{inputenc} 
\usepackage[T1]{fontenc}
\usepackage{url}
\usepackage{ifthen}
\usepackage{cite}

\interdisplaylinepenalty=2500 % As explained in bare_conf.tex

\usepackage{tikz}
\usetikzlibrary{positioning} 
\usepackage{xcolor}
%\usepackage[dvipsnames]{xcolor}

% ------------------------------------------------------------
\begin{document}
\title{Rainbow Differential Privacy} 

%%% Several authors with up to three affiliations:
\author{Ziqi Zhou\IEEEauthorrefmark{1}, Onur G{\"u}nl{\"u}\IEEEauthorrefmark{2}, Rafael G. L. D'Oliveira\IEEEauthorrefmark{3}, Muriel Médard\IEEEauthorrefmark{4}, Parastoo Sadeghi\IEEEauthorrefmark{5}, and Rafael F. Schaefer\IEEEauthorrefmark{2}\\ \IEEEauthorrefmark{1}Department of Electrical Engineering and Computer Science, Technical University of Berlin, Germany\\ \IEEEauthorrefmark{2}Chair of Communications Engineering and Security, University of Siegen, Germany\\\IEEEauthorrefmark{3}SMSS, Clemson University, USA, \qquad \IEEEauthorrefmark{4}RLE, Massachusetts Institute of Technology, USA\\ \IEEEauthorrefmark{5}SEIT, University of New South Wales, Canberra, Australia \\  ziqi.zhou@campus.tu-berlin.de, \{onur.guenlue, rafael.schaefer\}@uni-siegen.de,\\ rdolive@clemson.edu, medard@mit.edu, p.sadeghi@unsw.edu.au}

\maketitle

\begin{abstract} 
We extend a previous framework for designing differentially private (DP) mechanisms via randomized graph colorings that was restricted to binary functions, corresponding to colorings in a graph, to multi-valued functions. As before, datasets are nodes in the graph and any two neighboring datasets are connected by an edge. In our setting, we assume that each dataset has a preferential ordering for the possible outputs of the mechanism, each of which we refer to as a rainbow. Different rainbows partition the graph of datasets into different regions. We show that if the DP mechanism is pre-specified at the boundary of such regions and behaves identically for all same-rainbow boundary datasets, at most one optimal such mechanism can exist and the problem can be solved by means of a morphism to a line graph. We then show closed form expressions for the line graph in the case of ternary functions. Treatment of ternary queries in this paper displays enough richness to be extended to higher-dimensional query spaces with preferential query ordering, but the optimality proof does not seem to follow directly from the ternary proof.
\end{abstract}

%%%%%%%%%%%%%%%%%%%%%%%%%%%%%%%%%%%%%%%%%%%%%%%%%%%%%%%%%%%%%%%%%%%%%%%%%%%%%%%%%%%%%%%%%%%%%%%%%%%%%%%%%%%%%%%%%%%%%%%%%%%%%%%%%%%%%%%%%%%%%
\section{Introduction}
Differential privacy (DP), proposed in \cite{dwork2006calibrating, dworkoriginal2006}, is a general privacy-preserving framework that aims to limit the statistical capability of a curious analyst, regardless of its computational power, in determining whether or not the data of a specific individual was used in response to its query\footnote{DP variants assuming a finite compute power for the adversary have been studied in works including \cite{computationalDP} but are not within the scope of our work.}. Since its inception, DP has attracted extensive research effort; see \cite{survey_2017} for a survey and \cite{DworkBook} for a treatment of the subject. Recent high-profile applications of DP include the  2020 US Census \cite{uscensus}, as well as by Google, Apple and Microsoft~\cite{google,apple,microsoft}.

The DP constraints are defined on any neighboring datasets that differ in data from one individual. These constraints are \emph{local, relative and dataset-independent} contributing to the success of DP as a privacy preserving framework. However, an undesirable byproduct has been that many DP implementations are agnostic to the actual dataset at hand. Indeed, a vast majority of output perturbation DP mechanisms  take the worst-case query sensitivity between any two neighboring datasets to determine the scale of noise \cite{DworkBook}. This is a pessimistic approach and can adversely affect query utility \cite{individual}. Several fixes have been proposed to improve utility. In one direction, noise calibration to smooth sensitivity was proposed in \cite{nissim2007smooth}, for which a chosen utility level is not guaranteed and the mechanism suffers from a heavy tail leading to outliers. Another direction is relaxation of the DP constraints \cite{individual, blowfish, profile}. For example, \cite{individual} proposed \emph{individual}-DP, which defines DP constraints only between the \emph{given realization} of a dataset and its neighbors. This, however, destroys group DP, i.e., implied DP constraints between non-neighboring datasets no longer remain. Recently, \cite{RafnoDP2colorPaper} proposed designing dataset-dependent DP mechanisms for binary-valued queries that guarantee optimal utility and yet, do not weaken the original DP constraints in any way; see also \cite{BinaryOutputDP}. Each dataset has a true query value (e.g., blue or red) and is represented as a node on a graph with edges representing neighboring datasets. Let mechanism randomness be homogeneously pre-specified only at the boundary datasets. \cite{RafnoDP2colorPaper} showed how these initial constraints can be optimally extended in closed-form for all other datasets, where the probability of giving the truthful query response is maximized by taking into account the distance to the boundary while tightly satisfying all $(\varepsilon, \delta)$-DP constraints.

In this paper, we consider a strict extension of \cite{RafnoDP2colorPaper} by increasing the number of possible query outputs (e.g., \texttt{blue}, \texttt{red}, or \texttt{green} representing majority votes among three choices). This extension is challenging in several ways. In the binary case, optimal probability assignment to outputting one value (e.g., \texttt{blue}) automatically specifies the whole mechanism. In the multivariate case this is not enough. In order to circumvent this, we assume a preferential order, which resembles a rainbow, in outputting query values and solve the problem sequentially (e.g., first for \texttt{blue} then \texttt{red}, where \texttt{green} is automatically specified if we consider three colors). Second, solving the problem optimally to achieve approximate $(\varepsilon, \delta)$-DP appears hard for the multivariate case, so we consider $\delta = 0$ (i.e., pure DP). With these simplifications, we are then able to provide the following results. 

If all boundary datasets have their mechanism pre-specified such that the mechanism is identical for all same-preference boundary datasets, then at most one unique such mechanism with order reasonable optimal utility can be found. Furthermore, the problem can then be reduced to a line (path) graph. We present a closed form solution to line graphs for the case of three colors. This ternary solution recovers the binary case of \cite{RafnoDP2colorPaper} as a special case, but a new general pattern emerges: the first preferred value has up to two operating regimes that are characterized by its boundary probability and $\varepsilon$. However, the second preferred value can exhibit up to three operating regimes that additionally depend on the sum of the boundary probabilities of the first and second highest-priority values. Treatment of ternary queries in this paper displays enough richness to be extended to higher-dimensional query spaces with preferential query ordering, but the optimality proof does not seem to follow directly from the ternary proof.

\usetikzlibrary{positioning}
\tikzset{w/.pic={
    \clip (0,0) circle (0.25);
    \begin{scope}
        \fill[white] (-0.25,0) rectangle (0.25,0.25);
        \fill[white] (-0.25,-0.25) rectangle (0.25,0);
    \end{scope}
    \draw (0,0) circle (0.25);
    \path [draw=white,line width=5pt](-0.25,0)--(0.25,0);}}

\usetikzlibrary{positioning}
\tikzset{brg/.pic={
    \clip (0,0) circle (0.25);
    \begin{scope}
        \fill[blue] (-0.25,0) rectangle (0.25,0.25);
        \fill[green] (-0.25,-0.25) rectangle (0.25,0);
    \end{scope}
    \draw (0,0) circle (0.25);
    \path [draw=red,line width=5pt](-0.25,0)--(0.25,0);}}

\tikzset{bgr/.pic={
    \clip (0,0) circle (0.25);
    \begin{scope}
        \fill[blue] (-0.25,0) rectangle (0.25,0.25);
        \fill[red] (-0.25,-0.25) rectangle (0.25,0);
    \end{scope}
    \draw (0,0) circle (0.25);
    \path [draw=green,line width=5pt]
(-0.25,0) -- (0.25,0);}}

\tikzset{rgb/.pic={
    \clip (0,0) circle (0.25);
    \begin{scope}
        \fill[red] (-0.25,0) rectangle (0.25,0.25);
        \fill[blue] (-0.25,-0.25) rectangle (0.25,0);
    \end{scope}
    \draw (0,0) circle (0.25);
    \path [draw=green,line width=5pt]
(-0.25,0) -- (0.25,0);}}

\tikzset{rbg/.pic={
    \clip (0,0) circle (0.25);
    \begin{scope}
        \fill[red] (-0.25,0) rectangle (0.25,0.25);
        \fill[green] (-0.25,-0.25) rectangle (0.25,0);
    \end{scope}
    \draw (0,0) circle (0.25);
    \path [draw=blue,line width=5pt]
(-0.25,0) -- (0.25,0);}}

\tikzset{gbr/.pic={
    \clip (0,0) circle (0.25);
    \begin{scope}
        \fill[green] (-0.25,0) rectangle (0.25,0.25);
        \fill[red] (-0.25,-0.25) rectangle (0.25,0);
    \end{scope}
    \draw (0,0) circle (0.25);
    \path [draw=blue,line width=5pt]
(-0.25,0) -- (0.25,0);}}

\tikzset{grb/.pic={
    \clip (0,0) circle (0.25);
    \begin{scope}
        \fill[green] (-0.25,0) rectangle (0.25,0.25);
        \fill[blue] (-.25,-0.25) rectangle (0.25,0);
    \end{scope}
    \draw (0,0) circle (0.25);
    \path [draw=red,line width=5pt] (-0.25,0) -- (0.25,0);}}

\begin{figure*}[!t]
\centering
\begin{subfigure}[t]{0.45\textwidth}
    \centering
    \begin{tikzpicture}
    \path pic {brg} node [align=center, above=2mm, scale=0.7, black] {a};
    \path pic [xshift=1cm] {brg} node [align=center, xshift=1cm, above=2mm, scale=0.7, black] {b};
    \path pic [xshift=2cm] {brg} node [align=center, xshift=2cm, above=2mm, scale=0.7, black] {c};
    \path pic [xshift=3cm] {rgb} node [align=center, xshift=3cm, above=2mm, scale=0.7, black] {d};
    \path pic [xshift=4cm] {rgb} node [align=center, xshift=4cm, above=2mm, scale=0.7, black] {e};
    \path pic [xshift=0.5cm, yshift=-1cm] {brg} node [align=center, xshift=0.51cm, yshift=-1cm, above=2mm, scale=0.7, black] {f};
    \path pic [xshift=1.5cm, yshift=-1cm] {brg} node [align=center, xshift=1.5cm, yshift=-1cm, above=2mm, scale=0.7, black] {g};
    \path pic [xshift=2.5cm, yshift=-1cm] {bgr} node [align=center, xshift=2.45cm, yshift=-1cm, above=2mm, scale=0.7, black] {h};
    \path pic [xshift=3.5cm, yshift=-1cm] {rgb} node [align=center, xshift=3.5cm, yshift=-1cm, above=2.1mm, scale=0.7, black] {i};
    \path pic [xshift=1cm, yshift=-2cm] {grb} node [align=center, xshift=1cm, yshift=-2cm, above=2mm, scale=0.7, black] {j};
    \path pic [xshift=2cm, yshift=-2cm] {bgr} node [align=center, xshift=1.8cm, yshift=-2.1cm, above=2mm, scale=0.7, black] {k};
    \path pic [xshift=3cm, yshift=-2cm] {rbg} node [align=center, xshift=3.2cm, yshift=-2.1cm, above=2mm, scale=0.7, black] {l};
    \path pic [xshift=4cm, yshift=-2cm] {rbg} node [align=center, xshift=4.1cm, yshift=-2cm, above=2mm, scale=0.7, black] {m};
    \path pic [xshift=5cm, yshift=-2cm] {rbg} node [align=center, xshift=5cm, yshift=-2cm, above=2mm, scale=0.7, black] {n};
    \path pic [xshift=0.5cm, yshift=-3cm] {grb} node [align=center, xshift=0.45cm, yshift=-3cm, above=2mm, scale=0.7, black] {o};
    \path pic [xshift=1.5cm, yshift=-3cm] {grb} node [align=center, xshift=1.3cm, yshift=-3.1cm, above=2mm, scale=0.7, black] {p};
    \path pic [xshift=2.5cm, yshift=-3cm] {gbr} node [align=center, xshift=2.7cm, yshift=-3.1cm, above=2mm, scale=0.7, black] {q};
    \path pic [xshift=3.5cm, yshift=-3cm] {gbr} node [align=center, xshift=3.7cm, yshift=-3.07cm, above=2mm, scale=0.7, black] {r};
   % \path pic [xshift=3cm, yshift=-4cm] {gbr} node [align=center, xshift=3.015cm, yshift=-4cm, above=2.3mm, scale=0.7, black] {s};
    % horizontal lines in the 1st layer
    \draw (0.25,0) -- (0.75,0);
    \draw (1.25,0) -- (1.75,0);
    % \draw [dashed] (2.25,0) -- (2.75,0);
    \draw (3.25,0) -- (3.75,0);
    % vertical lines from the 1st layer
    \draw (0,-0.25) -- (0.5,-0.75);
    \draw (1,-0.25) -- (0.5,-0.75);
    \draw (2,-0.25) -- (1.5,-0.75);
    \draw (2,-0.25) -- (2,-1.75);
    \draw (2,-0.25) -- (2.5,-2.75);
    \draw (3,-0.25) -- (2.5,-0.75);
    \draw (3,-0.25) -- (3.5,-0.75);
    \draw (4,-0.25) -- (3.5,-0.75);
    % horizontal lines in the 2nd layer
    \draw (0.75,-1) -- (1.25,-1);
    \draw (1.75,-1) -- (2.25,-1);
    % vertical lines from the 2nd layer
    \draw (1.5,-1.25) -- (1,-1.75);
    \draw (1.5,-1.25) -- (1.5,-2.75);
    \draw (1.5,-1.25) -- (2,-1.75);
    \draw (1.5,-1.25) -- (3,-1.75);
    \draw (2.5,-1.25) -- (2,-1.75);
    \draw (2.5,-1.25) -- (3,-1.75);
    \draw (3.5,-1.25) -- (1.5,-2.75);
    % \draw [dashed] (3.5,-1.25) -- (3.5,-2.75);
    \draw (3.5,-1.25) -- (4,-1.75);
    % horizontal lines in the 3rd layer
    % \draw [dashed] (1.25,-2) -- (1.75,-2);
    \draw (2.25,-2) -- (2.75,-2);
    \draw (3.25,-2) -- (3.75,-2);
    \draw (4.25,-2) -- (4.75,-2);
    % vertical lines from the 3rd layer
    \draw (1,-2.25) -- (0.5,-2.75);
    \draw (1,-2.25) -- (1.5,-2.75);
    % \draw [dashed] (2,-2.25) -- (1.5,-2.75);
    \draw (3,-2.25) -- (1.5,-2.75);
    \draw (2,-2.25) -- (2.5,-2.75);
    \draw (3,-2.25) -- (3.5,-2.75);
    \draw (4,-2.25) -- (3.5,-2.75);
    % horizontal lines in the 4th layer
    \draw (2.75,-3) -- (3.25,-3);
    % vertical lines from the 4th layer
    %\draw (2.5,-3.25) -- (3,-3.75);
    %\draw (3.5,-3.25) -- (3,-3.75);
\end{tikzpicture}
    \caption{A rainbow graph.}
    \label{}%{fig:RainbowUndirectedGraph}
\end{subfigure}
\begin{subfigure}[t]{0.45\textwidth}
    \centering
    \begin{tikzpicture}
\path pic {brg} node [align=center, above=2mm, scale=0.7, black] {a\\$i=2$};
\path pic [xshift=1cm] {brg} node [align=center, xshift=1cm, above=2mm, scale=0.7, black] {b, f\\$i=1$};
\path pic [xshift=2cm] {brg} node [align=center, xshift=2cm, above=2mm, scale=0.7, black] {c, g\\$i=0$};
\path pic [xshift=3.5cm] {bgr} node [align=center, xshift=3.5cm, above=2mm, scale=0.7, black] {h, k\\$i=0$};
\path pic [xshift=0.5cm, yshift=-1cm] {rgb} node [align=center, xshift=0.5cm, yshift=-1cm, above=2mm, scale=0.7, black] {e} node [align=center, xshift=0.5cm, yshift=-1cm, below=2mm, scale=0.7, black] {$i=1$};
\path pic [xshift=1.5cm, yshift=-1cm] {rgb} node [align=center, xshift=1.5cm, yshift=-1cm, above=2mm, scale=0.7, black] {d, i} node [align=center, xshift=1.5cm, yshift=-1cm, below=2mm, scale=0.7, black] {$i=0$};
\path pic [xshift=4cm, yshift=-1cm] {rbg} node [align=center, xshift=4cm, yshift=-1cm, above=2mm, scale=0.7, black] {l, m} node [align=center, xshift=4cm, yshift=-1cm, below=2mm, scale=0.7, black] {$i=0$};
\path pic [xshift=5cm, yshift=-1cm] {rbg} node [align=center, xshift=5cm, yshift=-1cm, above=2mm, scale=0.7, black] {n} node [align=center, xshift=5cm, yshift=-1cm, below=2mm, scale=0.7, black] {$i=1$};
\path pic [xshift=1cm, yshift=-2cm] {grb} node [align=center, xshift=1cm, yshift=-2cm, below=2mm, scale=0.7, black] {o\\$i=1$};
\path pic [xshift=2cm, yshift=-2cm] {grb} node [align=center, xshift=2cm, yshift=-2cm, below=2mm, scale=0.7, black] {j, p\\$i=0$};
\path pic [xshift=3.5cm, yshift=-2cm] {gbr} node [align=center, xshift=3.5cm, yshift=-2cm, below=2mm, scale=0.7, black] {q, r\\$i=0$};
%\path pic [xshift=4.5cm, yshift=-2cm] {gbr} node [align=center, xshift=4.5cm, yshift=-2cm, below=2mm, scale=0.7, black] {s\\$i=1$};

\draw (0.25,0) -- (0.75,0);
\draw (1.25,0) -- (1.75,0);
\draw (0.75,-1) -- (1.25,-1);
\draw (4.25,-1) -- (4.75,-1);
\draw (1.25,-2) -- (1.75,-2);
%\draw (3.75,-2) -- (4.25,-2);

\draw  (2.25,0) -- (3.25,0);
% \draw  [dashed] (2.25,0) -- (1.75,-1);
\draw  (2.25,0) -- (2.25,-2);
\draw  (2.25,0) -- (3.75,-1);
\draw  (2.25,0) -- (3.25,-2);

\draw  (1.75,-1) -- (3.75,-1);
\draw  (1.75,-1) -- (3.25,0);
\draw  (1.75,-1) -- (2.25,-2);
% \draw  [dashed] (1.75,-1) -- (3.25,-2);

\draw  (2.25,-2) -- (3.25,-2);
% \draw  [dashed] (2.25,-2) -- (3.25,0);
\draw  (2.25,-2) -- (3.75,-1);

\draw  (3.25,-2) -- (3.25,0);
\draw  (3.25,-2) -- (3.75,-1);

\draw  (3.25,0) -- (3.75,-1);

%\draw  [white] (3.25,-2) -- (3.75,-3.5);%in order to put the graph in the middle place
\end{tikzpicture}
    \caption{The boundary rainbow graph.}
    \label{fig: rainbow b}
\end{subfigure}
\caption{A rainbow graph and its corresponding boundary graph. Each vertex represents a dataset and neighboring datasets are connected by an edge. The function output space is represented by colors \texttt{blue}, \texttt{red}, and \texttt{green}. Each dataset has a color preference, represented by the ordering inside the vertex. For example, vertex $a$ prefers \texttt{blue} to \texttt{red} and \texttt{red} to \texttt{green}. We call each such color ordering a rainbow. In this context, a DP mechanism is a probability distribution on colors of each vertex. In (b), we show the boundary rainbow graph of the rainbow graph shown in (a), as described in Definition~\ref{def:boundarymorph}. In Theorem \ref{theo: general to line} we show how, for boundary homogeneous rainbow graphs defined in Definition \ref{def:boundaryhomogeneity}, optimal $\varepsilon$-DP mechanisms on (a) can be retrieved from optimal ones on (b), which then can be obtained from Theorems~\ref{theo:resultforblue} and $\ref{theo:resultforred}$.} 
\label{fig: rainbow}
\vspace{-0.45cm}
\end{figure*}
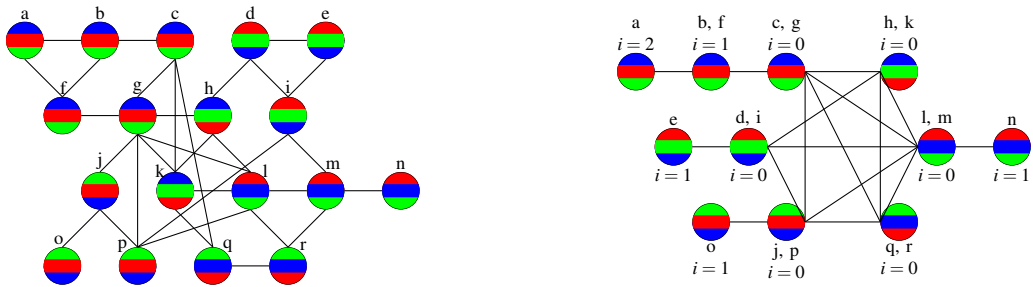

%%%%%%%%%%%%%%%%%%%%%%%%%%%%%%%%%%%%%%%%%%%%%%%%%%%%%%%%%%%%%%%%%%%%%%%%%%%%%%%%%%%%%%%%%%%%%%%%%%%%%%%%%%%%%%%%%%%%%%%%%%%%%%%%%%%%%%%%%%%%%%%%%%%%%%%%%%%%%%
\section{Setting}
We denote by $(\mathcal{D}, \sim)$ a family of datasets together with a symmetric neighborhood relationship, where $d,d' \in \mathcal{D}$ are neighbors if $d \sim d'$. We consider a finite output space $\mathcal{V}$. Each dataset $d \in \mathcal{D}$ has an ordered preference for the elements of $\mathcal{V}$, captured by what we call a \emph{rainbow} that represents each preference order.

\begin{definition}
    Let $\mathcal{V}$ be a finite output space. A rainbow on $\mathcal{V}$ is a total ordering of $\mathcal{V}$. We denote a rainbow as a permutation vector $c \in \sym(\mathcal{V})$, where $\sym(\mathcal{V})$ is the set of all permutations of $\mathcal{V}$.
\end{definition}

Then, the preference of a dataset is captured by the \emph{preference function} $f:\mathcal{D} \rightarrow \sym(\mathcal{V})$ which assigns a rainbow to each dataset $d \in \mathcal{D}$. Thus, if $f(d) = (\texttt{blue,red, green})$, then the dataset $d \in \mathcal{D}$ prefers \texttt{blue} to \texttt{red} and \texttt{red} to \texttt{green}. The goal is to construct a random function $\mathcal{M}:\mathcal{D} ~\to ~\mathcal{V}$ that for each dataset $d \in \mathcal{D}$ randomly outputs an element of $\mathcal{V}$ such that for a given DP constraint a certain utility function is maximized. As commonly done in the DP literature, we refer to the random function as a \emph{mechanism}. A mechanism is differentially private if the distribution of its output on neighboring datasets are approximately indistinguishable, as we formalize next.

\begin{definition}[\hspace{1sp}\cite{DworkBook}]\label{def:DPdef}
Let $\varepsilon$ be a non-negative real number. Then, a mechanism $\mathcal{M}: \mathcal{D}\rightarrow \mathcal{V}$ is $\varepsilon$-DP if for any $d \sim d'$ and $\mathcal{S} \subseteq \mathcal{V}$, it holds that $\Pr[\mathcal{M}(d) \in \mathcal{S}] \leq e^\varepsilon \Pr[\mathcal{M}(d') \in \mathcal{S}]$. We denote the set of all $\varepsilon$-DP mechanisms by $\mathfrak{M}$.
\end{definition}

For finite output space $\mathcal{V}$ it suffices to consider subsets $\mathcal{S} \subseteq \mathcal{V}$ with $|\mathcal{S}|=1$, i.e., Definition~\ref{def:DPdef} holds, if and only if, $\Pr[\mathcal{M}(d)=v]\leq e^{\varepsilon}\Pr[\mathcal{M}(d')=v]$ for every $d \sim d'$ and $v \in \mathcal{V}$. 

Performance of a mechanism is measured through a utility function $U:\mathfrak{M} \rightarrow \mathbb{R}$, where $U[\mathcal M] \geq U[\mathcal M']$ means that the mechanism $\mathcal M$ outperforms $\mathcal M'$. In this work, we consider utility functions that agree with the preference function $f:\mathcal{D} \rightarrow \sym(\mathcal{V})$, i.e., all things equal, it is preferable for a dataset $d \in \mathcal{D}$ to output a color it prefers according to its rainbow $f(d) \in \sym(\mathcal{V})$.

\begin{definition} \label{def:order_reasonable}
Let $\preceq$ be the lexicographical ordering on the probability simplex $\Delta(\mathcal{V}) \!=\! \{x \!\in\! [0,1]^{|\mathcal{V}|}\! : x_1\!+\!\cdots\!+\!x_{|\mathcal{V}|} =~1\}$. For every mechanism $\mathcal{M} \in \mathfrak{M}$ and dataset $d \in \mathcal{D}$, let $\vec{\mathcal{M}}(d) \!\!~\in~\!\! \Delta(\mathcal{V})$ be the vector with coordinates $\vec{\mathcal{M}}_k = \Pr[\mathcal{M}(d) = f(d)_k]$. Then, a mechanism $\mathcal{M}\in \mathfrak{M}$ dominates another mechanism $\mathcal{M}'\in \mathfrak{M}$ if for every dataset $d \in \mathcal{D}$, $\vec{\mathcal{M}}(d) \succeq \vec{\mathcal{M}'}(d)$. Moreover, we define a utility function $U:\mathfrak{M} \rightarrow \mathbb{R}$ as \textit{order reasonable} if whenever a mechanism $\mathcal{M}\in \mathfrak{M}$ dominates another mechanism $\mathcal{M}'\in \mathfrak{M}$, then $U[\mathcal{M}] \geq U[\mathcal{M}']$.
\end{definition}

The notion of domination in Definition~\ref{def:order_reasonable} induces a partial order on the set $\mathfrak{M}$ of all $\varepsilon$-DP mechanisms. When a mechanism $\mathcal M$ dominates $\mathcal M'$, it means that $\mathcal M$ outperforms $\mathcal M'$ for any order reasonable utility. In this setting, we say a mechanism is \emph{optimal} if no other mechanism dominates it.

As in \cite{RafnoDP2colorPaper}, we represent a family of datasets together with their neighboring relation $(\mathcal{D},\sim, f)$ by a simple graph, where the vertices are the datasets in $\mathcal{D}$ and there is an edge between $d,d' \in \mathcal{D}$ if and only if they are neighbors, i.e., $d \sim d'$.

\begin{definition}[\hspace{1sp}\cite{RafnoDP2colorPaper}]\label{def:graphmorphism}
A morphism between $(\mathcal{D}_1, \overset{1}{\sim})$ and $(\mathcal{D}_2, \overset{2}{\sim})$ is a function $g: (\mathcal{D}_1, \overset{1}{\sim}) \rightarrow (\mathcal{D}_2, \overset{2}{\sim})$ such that ${d\overset{1}{\sim}d'}$ implies in either $g(d)\overset{2}{\sim}g(d')$ or $g(d)=g(d')$ for every $d,d' \in \mathcal{D}_1$.
\end{definition}

An example of a morphism is shown in Fig. \ref{fig: rainbow}. A morphism $g: (\mathcal{D}_1, \overset{1}{\sim}) \rightarrow (\mathcal{D}_2, \overset{2}{\sim})$ allows to transport $\varepsilon$-DP mechanisms from its codomain to its domain. 

\begin{theorem}[\hspace{1sp}\cite{RafnoDP2colorPaper}]\label{theo:pullbackworks}
Let $g: (\mathcal{D}_1, \overset{1}{\sim}) \rightarrow (\mathcal{D}_2, \overset{2}{\sim})$ be a morphism and $\mathcal{M}_2: \mathcal{D}_2 \rightarrow \mathcal{V}$ be an $\varepsilon$-DP mechanism on $(\mathcal{D}_2, \overset{2}{\sim})$. Then, the mechanism $\mathcal{M}_1: \mathcal{D}_1 \rightarrow \mathcal{V}$ given by the pullback operation $\mathcal{M}_1 = \mathcal{M}_2 \circ g$ is an $\varepsilon$-DP mechanism on $\mathcal{D}_1$.
\end{theorem}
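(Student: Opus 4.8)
The plan is to verify the $\varepsilon$-DP inequality for the pullback mechanism $\mathcal{M}_1 = \mathcal{M}_2 \circ g$ directly on each neighboring pair in $\mathcal{D}_1$, reducing everything to singleton output events as justified immediately after Definition~\ref{def:DPdef}. First I would fix an arbitrary pair $d \overset{1}{\sim} d'$ in $\mathcal{D}_1$ and an arbitrary output $v \in \mathcal{V}$, and rewrite the relevant probabilities using the definition of the pullback: $\Pr[\mathcal{M}_1(d) = v] = \Pr[\mathcal{M}_2(g(d)) = v]$ and analogously for $d'$. The target inequality $\Pr[\mathcal{M}_1(d)=v] \le e^\varepsilon \Pr[\mathcal{M}_1(d')=v]$ thus becomes a statement about $\mathcal{M}_2$ evaluated at the images $g(d)$ and $g(d')$.

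Next I would split into the two cases permitted by the morphism Definition~\ref{def:graphmorphism}. In the case $g(d) \overset{2}{\sim} g(d')$, the images are neighbors in $\mathcal{D}_2$, so the $\varepsilon$-DP property of $\mathcal{M}_2$ yields $\Pr[\mathcal{M}_2(g(d))=v] \le e^\varepsilon \Pr[\mathcal{M}_2(g(d'))=v]$ directly. In the case $g(d) = g(d')$, the two probabilities coincide, and since $\varepsilon \ge 0$ we have $e^\varepsilon \ge 1$, so the inequality holds trivially. Because the morphism definition guarantees that one of these two cases always occurs whenever $d \overset{1}{\sim} d'$, the inequality holds for every such neighboring pair and every $v \in \mathcal{V}$, and the singleton reduction then upgrades this to the general subset form, establishing that $\mathcal{M}_1$ is $\varepsilon$-DP.

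There is essentially no deep obstacle here; the argument is a short case analysis. The one point that must not be overlooked is the nonnegativity of $\varepsilon$, which is precisely what makes the collapsing case $g(d)=g(d')$ compatible with the multiplicative DP constraint: a morphism is permitted to identify neighboring datasets, and this identification preserves privacy exactly because $e^\varepsilon \ge 1$. I would also state explicitly the reduction from general $\mathcal{S} \subseteq \mathcal{V}$ to singleton events $\{v\}$, since this is what lets the verification proceed pointwise in $v$ and keeps the case analysis elementary.
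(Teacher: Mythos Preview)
Your argument is correct and is the standard case-by-case verification one would expect for this statement. Note, however, that the paper does not actually supply its own proof of Theorem~\ref{theo:pullbackworks}: the result is imported from \cite{RafnoDP2colorPaper} and stated without proof, so there is no in-paper argument to compare against. Your proposal would serve as a complete and self-contained justification if one were desired.
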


\section{Optimal Rainbow Differential Privacy}

In \cite{RafnoDP2colorPaper}, DP schemes were interpreted as randomized graph colorings. In that setting, each dataset's preference was characterized by a single color. In general, for larger output spaces, each dataset has a corresponding rainbow according to its ordering preference. Thus, we call the triple $(\mathcal{D},\sim,f)$ a \emph{rainbow graph}, where $\mathcal{D}$ is the family of datasets, $\sim$ is the neighborhood relationship, and $f: \mathcal{D} \rightarrow \sym (\mathcal{V})$ is the preference function. We define a morphism $g\!:\!(\mathcal{D}_1, \overset{1}{\sim},f_1) \!\rightarrow\! (\mathcal{D}_2,\overset{2}{\sim},f_2)$ as \emph{rainbow-preserving} if $f_1 = f_2 \circ g$. Indeed, the morphism in Fig. \ref{fig: rainbow} is rainbow-preserving. We consider the following topological notions.

\begin{definition}
    Let $(\mathcal{D},\sim,f)$ be a rainbow graph. Then, for every $c \in \sym(\mathcal{V})$, we denote $C = \{ d\in \mathcal{D} : f(d)=c \}$. The interior of $C$ is the set $\interior{C} = \{ d\in C : d \sim d' \Rightarrow d' \in C \}$ and its boundary is the set $\partial C = C - \interior{C}$.
\end{definition}

Next, we define a homogeneity condition for DP mechanisms, as defined in \cite{RafnoDP2colorPaper} for binary functions.

\begin{definition}\label{def:boundaryhomogeneity}
A mechanism $\mathcal{M}: \mathcal{D} \rightarrow \mathcal{V}$ is \emph{boundary homogeneous} if, for every rainbow $c \in \sym(\mathcal{V})$, it holds that any two boundary datasets $d,d' \in \partial C$ satisfy $\Pr[\mathcal{M}(d) = v] = \Pr[\mathcal{M}(d') = v]$ for every $v \in \mathcal{V}$.
\end{definition}

Our next result shows that optimal boundary homogeneous DP mechanisms are fully characterized by their values at the boundary set.

\begin{theorem} \label{teo: unique optimal}
Let $(\mathcal{D},\sim,f)$ be a rainbow graph and, for every rainbow $c \in \sym(\mathcal{V})$ and $d \in \partial C$, let $(m_1,m_2,\ldots, m_{|\mathcal{V}|})=\vec{m}\in \Delta (\mathcal{V})$ be a fixed and homogeneous probability distribution. Then, there exists at most one optimal boundary homogeneous $\varepsilon$-DP mechanism $\mathcal{M}:\mathcal{D} \rightarrow \mathcal{V}$ such that $\Pr[\mathcal{M}(d) ~=~ v_k] = m_k$, for every $d \in \partial C$.
\end{theorem}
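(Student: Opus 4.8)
The plan is to prove the statement by exhibiting a \emph{greatest} element of the feasible set---all boundary homogeneous $\varepsilon$-DP mechanisms agreeing with $\vec{m}$ on every $\partial C$---under the domination order of Definition~\ref{def:order_reasonable}; since a greatest element, when it exists, is the unique maximal one, and ``optimal'' means ``undominated'' (hence maximal), this forces at most one optimal mechanism. I would first reduce the problem region by region. Because a vertex in $\interior{C}$ is adjacent only to vertices of $C$, and because all boundary vertices are pinned to the common value $\vec{m}$ (boundary homogeneity together with the fixed distribution), both the $\varepsilon$-DP constraints and the pointwise domination decouple across rainbows: a mechanism is optimal if and only if its restriction to each $\interior{C}$ is optimal for the Dirichlet data $\vec{m}$ on $\partial C$. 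I then fix a single rainbow $c$ and work inside $C$, where every vertex shares the ordering $c$, so that the lexicographic comparison of the vectors $\vec{\mathcal{M}}(d)$ is coordinate-aligned throughout the region.

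Next I would peel the lexicographic order into a sequence of binary problems. For $k=1,\dots,|\mathcal{V}|$ set $F_k(d)=\Pr[\mathcal{M}(d)\in\{c_1,\dots,c_k\}]$, with boundary value $M_k=m_1+\cdots+m_k$. Maximizing $\vec{\mathcal{M}}(d)$ lexicographically at every vertex is equivalent to maximizing $F_1$ everywhere, then $F_2$, and so on; and applying Definition~\ref{def:DPdef} to the set $\{c_1,\dots,c_k\}$ yields, for neighbors inside $C$, the two-sided constraint $e^{-\varepsilon}F_k(d')\le F_k(d)\le e^{\varepsilon}F_k(d')$. Each layer is then exactly the binary randomized-coloring problem of \cite{RafnoDP2colorPaper}---the event ``output lies among the top $k$ preferences'' against its complement---whose pointwise-maximal $\varepsilon$-DP extension from fixed boundary data is unique and available in closed form. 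If the layer optima $F_1^\star,\dots,F_{|\mathcal{V}|}^\star$ are jointly realized by one mechanism, then every $\Pr[\mathcal{M}(d)=c_k]=F_k^\star(d)-F_{k-1}^\star(d)$ is forced, the restriction to $\interior{C}$ is unique, and reassembling the regions gives uniqueness on all of $\mathcal{D}$.

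The main obstacle is precisely the joint realizability of these layer optima, and this is exactly where the multivalued case departs from the binary one. The prefix constraints above are strictly weaker than the singleton constraints $\Pr[\mathcal{M}(d)=c_k]\le e^{\varepsilon}\Pr[\mathcal{M}(d')=c_k]$ that actually characterize $\varepsilon$-DP, so independently maximizing each $F_k$ need not respect either the monotonicity $F_1^\star\le\cdots\le F_{|\mathcal{V}|}^\star$ or the singleton constraints on the differences. Equivalently, the feasible set is \emph{not} closed under the naive pointwise lexicographic join: one can exhibit two feasible mechanisms on a single edge whose vertexwise lex-maximum violates $\varepsilon$-DP on a low-priority color, because a ``crossing'' (one mechanism dominating at $d$, the other at the neighbor $d'$) can pair a large top-preference mass at $d$ with a small one at $d'$. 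The crux is therefore to show that no such crossing can occur between two \emph{optimal} mechanisms that share the boundary $\vec{m}$. I would argue by induction on $|\mathcal{V}|$: fix the uniquely determined top layer $\Pr[\mathcal{M}(d)=c_1]=F_1^\star(d)$ first, and then show that the residual allocation over $\{c_2,\dots,c_{|\mathcal{V}|}\}$ again satisfies the hypotheses of the theorem for a smaller output space, with \cite{RafnoDP2colorPaper} serving as the base case $|\mathcal{V}|=2$. The delicate step---verifying that pinning the top preference to its unique maximum leaves the remaining singleton DP constraints simultaneously satisfiable and still admits a greatest residual mechanism---is the heart of the argument, and it is the reason boundary homogeneity and the consistent per-region rainbow are indispensable.
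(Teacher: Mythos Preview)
Your overall architecture---reduce to each region $C$ separately via boundary homogeneity, then maximize the color probabilities sequentially in the order dictated by the rainbow---is the same as the paper's. But your execution leaves the decisive step unproven. The detour through the cumulative functions $F_k$ is not how the paper argues, and as you yourself note, the prefix constraints they satisfy are strictly weaker than the singleton $\varepsilon$-DP constraints, so independently maximizing the $F_k$ need not produce a feasible mechanism. You then retreat to induction on $|\mathcal{V}|$ and correctly isolate the ``delicate step'': after pinning $\Pr[\mathcal{M}(d)=c_1]$ to its pointwise maximum, one must show the residual problem over $\{c_2,\dots,c_{|\mathcal{V}|}\}$ still admits a greatest element. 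You do not prove this; you only name it. Since that step is precisely the content of the theorem, the proposal is incomplete at the heart of the argument.

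The paper supplies the missing observation, and it is short. Work directly with $C_d^k=\Pr[\mathcal{M}(d)=c_k]$. A neighbor $d'$ imposes exactly two upper bounds on $C_d^1$: the direct bound $C_d^1\le e^\varepsilon C_{d'}^1$, and the bound obtained by summing the lower constraints on the remaining colors, $C_d^1=1-\sum_{k\ge 2}C_d^k\le 1-e^{-\varepsilon}\sum_{k\ge 2}C_{d'}^k=(e^\varepsilon-1+C_{d'}^1)/e^\varepsilon$. Both are non-decreasing in $C_{d'}^1$, so the set of feasible first-color profiles $(C_d^1)_{d\in C}$ is closed under coordinatewise maximum and therefore has a unique greatest element $\bar C^1$. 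Once $C_d^1=\bar C_d^1$ is fixed, the constraints on $k\ge 2$ have exactly the same structure, with total mass $1-\bar C_d^1$ replacing $1$; the same monotonicity yields $\bar C^2$, and so on through $\bar C^{|\mathcal{V}|}$. No appeal to the binary result of \cite{RafnoDP2colorPaper} or to an outer induction on $|\mathcal{V}|$ is needed; the monotonicity of the two bounds is the entire device that makes the greedy sequence succeed, and it is what your proposal is missing.
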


\begin{proof}
We assume that there exists at least one boundary homogeneous mechanism satisfying the $\varepsilon$-DP constraints, otherwise the result trivially holds. Let $c\in \sym(\mathcal{V})$ be a rainbow and consider the set $C$. We denote $C_d^k = \Pr[\mathcal{M}(d)~=~c_k]$. Then, the $\varepsilon$-DP constraints in $C$ are equivalent to the following statement: for every $k \in [1: |\mathcal{V}|]$ and $d \sim d'$ with $d,d \in C$, it holds that $C_d^k \geq 0$, $\sum_{k=1}^{|\mathcal{V}|} C_d^k =1$, $C_d^k \leq e^\varepsilon \cdot C_{d'}^k$, and $C_{d'}^k \leq e^\varepsilon \cdot C_{d}^k$.

Consider the highest priority color $c_1$ in $C$. If $d\sim d'$ are two neighboring datasets in $C$, the dataset $d'$ imposes two upper bounds on $d$, namely $C_{d}^1 \leq e^\varepsilon \cdot C_{d'}^1$ and 
\begin{align}
    C_{d}^1 = 1 - \sum_{k=2}^{|\mathcal{V}|} C_{d}^k \leq 1 - e^{- \varepsilon} \sum_{k=2}^{|\mathcal{V}|} C_{d'}^k = \frac{e^\varepsilon - 1 + C_{d'}^1}{e^\varepsilon}.
\end{align}
Since both bounds are non-decreasing in $C_{d'}^1$, it holds that all $C_{d}^1$ with $d \in \interior{C}$ can be simultaneously maximized. Denote these maximums by $\Bar{C}_{d}^1$. 

Now, when we consider the second highest priority color $c_2$ in $C$, if we set $C_{d}^1 = \Bar{C}_{d}^1$, then the $\varepsilon$-DP constraints in $C$ are equivalent to the following statement: for every $k \in [1:~|\mathcal{V}|]-~\{1\}$ and $d \sim d'$ with $d,d \in C$, it holds that $C_d^k \geq 0$, $\sum_{k=2}^{|\mathcal{V}|} C_d^k =1-\Bar{C}_{d}^1$, $C_d^k \leq e^\varepsilon \cdot C_{d'}^k$, and $C_{d'}^k \leq e^\varepsilon \cdot C_{d}^k$. Analogously to the case of $c_1$, if $d\sim d'$ are two neighboring datasets in $C$, the dataset $d'$ imposes two upper bounds on $d$ with respect to $C_{d'}^2$, both of which are non-decreasing in it. Thus, all the $C_{d}^2$ with $d \in \interior{C}$ can be simultaneously maximized. Denote these maximums by $\Bar{C}_{d}^2$. 

Repeating this argument for every color in $\mathcal{V}$ and then for every rainbow $c$ in the rainbow graph, we obtain a unique optimal boundary homogeneous $\varepsilon$-DP mechanism. 
\end{proof}

Another key notion we use is that of the line graph.

\begin{definition}
    Let $c \in \sym(\mathcal{V})$ be a rainbow and $n \in \mathbb{N}$. The $(n,c)$-line is the rainbow graph $(\mathcal{D},\sim,f)$ with datasets $\mathcal{D} = [1:~(n-1)]$, neighboring relation $i \sim j$ if $|i-j|=1$, and preference function $f(d)=c$ for every $d \in \mathcal{D}$.
\end{definition}

The last notion we need for Theorem \ref{theo: general to line} is that of the boundary rainbow graph of a rainbow graph.

\begin{definition}\label{def:boundarymorph}
The boundary morphism of a rainbow graph $(\mathcal{D},\sim,f)$ is the morphism $g_\partial : \mathcal{D} \rightarrow \sym(\mathcal{V}) \times \mathbb{N}$ such that $g_\partial = [f(d),\dist (d,\partial f(d))]$. The boundary rainbow graph of $(\mathcal{D},\sim,f)$ is then the rainbow graph $(\mathcal{D}_\partial, \overset{\partial}{\sim},f_\partial)$ with datasets $\mathcal{D}_\partial = g_\partial(\mathcal{D})$, preference functions $f_\partial = f \circ g_\partial^{-1}$, and neighboring relationship where two distinct datasets are $d_1' \overset{\partial}{\sim} d_2'$ if $g_\partial^{-1} (d_1') \sim g_\partial^{-1} (d_2')$. For every rainbow $c\in \sym(\mathcal{V})$, we define $C_\partial = \{ d\in \mathcal{D}_\partial : f(d)=c \}$.
\end{definition}

Thus, the boundary rainbow graph consists of a series of line graphs, each for a different rainbow occurring in the original graph. We now show that optimal mechanisms for boundary homogeneous rainbow graphs can be obtained by pulling them back from their boundary rainbow graphs.

\begin{theorem} \label{theo: general to line}
Let $(\mathcal{D}, \sim, f)$ be a rainbow graph and  $\mathcal{M}_\partial: \mathcal{D}_\partial \rightarrow \mathcal{V}$ be the optimal $\varepsilon$-DP mechanism on its boundary graph subject to some fixed boundary probabilities. Then, the pullback $\mathcal{M} = \mathcal{M}_\partial \circ g_\partial$ is the optimal boundary homogeneous $\varepsilon$-DP mechanism subject to the same boundary probabilities.
\end{theorem}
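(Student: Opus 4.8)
The plan is to first certify that the boundary morphism $g_\partial$ is a genuine rainbow-preserving morphism, then invoke the pullback machinery of Theorem~\ref{theo:pullbackworks} to obtain feasibility, and finally prove optimality by reducing each interior dataset to a single geodesic that maps onto one of the line graphs making up $\mathcal{D}_\partial$, where optimality is already understood.

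\textbf{Morphism and feasibility.} I would verify that $g_\partial$ is a morphism by a case analysis on an edge $d \sim d'$. Since $g_\partial(d)$ records $f(d)$ in its first coordinate, $f$ is constant on the fibers of $g_\partial$, so $f_\partial$ is well defined and $f = f_\partial \circ g_\partial$, i.e.\ $g_\partial$ is rainbow-preserving. If $f(d)=f(d')=c$, then both lie in $C$ and the distance-to-boundary map is $1$-Lipschitz, so $\dist(d,\partial C)$ and $\dist(d',\partial C)$ either coincide (giving $g_\partial(d)=g_\partial(d')$) or differ by one (giving consecutive nodes on the $c$-line, hence $g_\partial(d)\overset{\partial}{\sim}g_\partial(d')$ by the definition of $\overset{\partial}{\sim}$). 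If $f(d)\neq f(d')$, each of $d,d'$ has a neighbor of a different rainbow and thus sits on its own boundary, so both map to distance-$0$ nodes, which are adjacent in $\mathcal{D}_\partial$ by construction. Hence $g_\partial$ is a morphism, and Theorem~\ref{theo:pullbackworks} immediately yields that $\mathcal{M}=\mathcal{M}_\partial\circ g_\partial$ is $\varepsilon$-DP. Boundary homogeneity with the prescribed probabilities then follows because every $d\in\partial C$ has $\dist(d,\partial C)=0$, so $g_\partial(d)=[c,0]$ and every same-rainbow boundary dataset receives the identical fixed distribution $\mathcal{M}_\partial([c,0])$.

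\textbf{Optimality via geodesic reduction.} This is the crux. Let $\mathcal{M}'$ be any boundary homogeneous $\varepsilon$-DP mechanism with the same boundary probabilities, and fix a dataset $d$ in region $C$ with $i=\dist(d,\partial C)$. I would first observe that a shortest path from $d$ to a nearest boundary dataset stays inside $C$ (leaving $C$ would require crossing $\partial C$ first) and that its vertices $d_0,\dots,d_i=d$ satisfy $\dist(d_j,\partial C)=j$. Restricting $\mathcal{M}'$ to this geodesic yields an assignment on the $c$-line that is $\varepsilon$-DP, since the edge constraints along the path form a subset of the constraints satisfied by $\mathcal{M}'$, and whose value at $[c,0]$ equals the fixed $\vec m$; here boundary homogeneity is essential, as it forces the endpoint $d_0\in\partial C$ to carry exactly $\vec m$ no matter which boundary dataset it is. Because $\mathcal{M}_\partial$ is the optimal, hence dominant, mechanism on the line with that fixed boundary (the construction in the proof of Theorem~\ref{teo: unique optimal} maximizes every coordinate lexicographically and simultaneously), we conclude $\vec{\mathcal{M}'}(d)\preceq \vec{\mathcal{M}_\partial}([c,i])=\vec{\mathcal{M}}(d)$. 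Since $d$ was arbitrary, $\mathcal{M}$ dominates $\mathcal{M}'$, so $\mathcal{M}$ is optimal; Theorem~\ref{teo: unique optimal} then upgrades this to the asserted uniqueness.

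I expect the delicate part to be the optimality step, specifically the claim that a single geodesic already captures all binding constraints. One must argue that constraints coming from other boundary datasets or from the deep interior cannot tighten the bound below the line-graph value, which is precisely where the hypothesis that all boundary values equal $\vec m$ is used, and that the restriction of a feasible mechanism to a geodesic extends to a feasible mechanism on the full $c$-line so that the line-graph optimum applies. The remaining bookkeeping, namely that the line-graph optimum depends only on $\varepsilon$, $\vec m$, and the distance $i$ and therefore coincides with $\vec{\mathcal{M}}(d)$, is routine given Theorem~\ref{teo: unique optimal}.
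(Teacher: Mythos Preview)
Your proposal is correct and follows essentially the same geodesic-reduction strategy as the paper. The paper likewise pulls back via $g_\partial$ for feasibility and then, for each $d$, restricts along a shortest path $G$ from $d$ to $\partial C$; the only cosmetic difference is that the paper phrases the comparison by taking the left inverse $h$ of $g_\partial\restrictedto{G}$ and pulling the (abstractly existing) optimal $\mathcal{M}$ back to $g_\partial(G)$, whereas you push an arbitrary competitor $\mathcal{M}'$ forward to the line and dominate it there---but these are the same argument read in opposite directions, and your explicit verification that $g_\partial$ is a morphism and that the geodesic stays in $C$ with $\dist(d_j,\partial C)=j$ fills in details the paper leaves implicit.
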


\begin{proof}
From Theorem \ref{theo:pullbackworks}, it follows that the morphism $g_\partial: \mathcal{D} \rightarrow \mathcal{D}_{\partial}$ induces an $\varepsilon$-DP mechanism on $\mathcal{D}$ defined by $\mathcal{M}_{\partial}\circ~g_\partial$. This mechanism is clearly boundary homogeneous. From Theorem \ref{teo: unique optimal}, it follows that there is a unique optimal boundary homogeneous $\varepsilon$-DP mechanism on $\mathcal{D}$. Denote this mechanism by $\mathcal{M}$. We next show that $\mathcal{M} = \mathcal{M}_{\partial} \circ g_\partial$.

Let $C \subseteq \mathcal{D}$ be a subset of datasets with the same preference function. It follows from the optimality of $\mathcal{M}$ that $\vec{\mathcal{M}}(d) \succeq \overrightarrow{\mathcal{M}_\partial \circ g_\partial }(d)$. Let $d_0$ be the closest dataset to $d$ belonging to $\partial C$. Let $G = \{d, d_{\dist(d,\partial B)-1}, \ldots, d_0 \}$ be a set of datasets which forms a shortest path from $d$ to $d_0$. Since $g_{\partial} \restrictedto{G}$ is injective, it has a left inverse, which we denote as $h: g_\partial (G) ~\rightarrow ~G$. But $h$ is a morphism and, therefore, from Theorem \ref{def:graphmorphism} $\mathcal{M}\circ h$ is an $\varepsilon$-DP mechanism on $\mathcal{D}_\partial$. Then, since $\mathcal{M}_\partial$ is the optimal mechanism on $\mathcal{D}_\partial$, it follows that $\mathcal{M}_\partial \restrictedto{g_\partial (G)}$ is the optimal mechanism on $g_\partial (G)$. Thus, $\overrightarrow{\mathcal{M}_\partial \circ g_\partial }(d) \succeq \vec{\mathcal{M}}(d) $, which implies $\vec{\mathcal{M}}(d) = \overrightarrow{\mathcal{M}_\partial \circ g_\partial }(d)$.
\end{proof}

Since the boundary rainbow graph consists of a series of line graphs, the problem of finding optimal mechanisms can be reduced to finding them for line graphs.
%%%%%%%%%%%%%%%%%%%%%%%%%%%%%%%%%%%%%%%%%%%%%%%%%%%%%%%%%%%%%%%%%%%%%%%%%%%
\section{Optimal Line Graphs for $3$-Colored Rainbows}
In this section we present closed form expressions for the optimal $\varepsilon$-DP mechanisms over $(n,c)$-line graphs for $3$-colored rainbows. We denote the output space by $\mathcal{V}=\{\texttt{blue, red, green}\}$ and consider, without loss of generality, the rainbow $c = (\texttt{blue, red, green})$. To simplify notation, we denote by $B_i = \Pr[\mathcal{M}(i)=~\texttt{blue}]$, $R_i = \Pr[\mathcal{M}(i)~=~\texttt{red}]$, and $G_i = \Pr[\mathcal{M}(i)=\texttt{green}]$.

For every $a\in\mathbb{R}$, we define $\{a\}^{+}=\max\{a,0\}$. We also define the following index thresholds.
\begin{align}
    &\tau_B =\Bigg\lfloor\bigg\{-\frac{\ln\big(B_0\cdot(e^\varepsilon +1)\big)}{\varepsilon} + 1 \bigg\}^{+}\Bigg\rfloor \label{eq:deftauB}\\
   &\tau_R = \Bigg\lfloor\bigg\{-\frac{\ln\big((B_0+R_0)\cdot(e^\varepsilon +1)\big)}{\varepsilon} + 1\bigg\}^{+}\Bigg\rfloor\label{eq:deftauR}.
\end{align}

Consider $d_i,d_{i+1}\in B$ such that $d_i\sim d_{i+1}$ for all $i=[0:(n-1)]$, then all active $\varepsilon$-DP constraints for the ternary output space $\mathcal{V}$ are
\noindent\begin{minipage}{.5\linewidth}
\begin{align}
&B_i,\; R_i,\; G_i \geq 0\label{eq:DPcondallnonnegative}\\
&B_i \leq e^{\varepsilon}\cdot B_{i+1}\label{eq:DPcondupperforBi}\\%\label{1}\\
&R_i \leq e^{\varepsilon}\cdot R_{i+1}\label{eq:DPcondupperforRi}\\%\label{5}\\
&G_i \leq e^{\varepsilon}\cdot G_{i+1}\label{eq:DPcondupperforGi}%\label{7}\\
\end{align}
\end{minipage}%
\begin{minipage}{.5\linewidth}
\begin{align}
&B_i + R_i + G_i = 1\label{eq:DPcondsumprob}\\%\label{10}\\
&B_{i+1} \leq e^{\varepsilon}\cdot B_i\label{eq:DPcondupperforBiplus1}\\%\label{2}\\
&R_{i+1} \leq e^{\varepsilon}\cdot R_i\label{eq:DPcondupperforRiplus1}\\%\label{6}\\
&G_{i+1} \leq e^{\varepsilon}\cdot G_i.\label{eq:DPcondupperforGiplus1}%\label{8}
\end{align}
\end{minipage}
\\

Now, we denote the probabilities for outputs of an optimal $\epsilon$-DP mechanism on dataset $i$ by $B^*_{i}$, $R^*_{i}$, and $G^*_{i}$, and consider the problem of finding a closed form for them as a function of the values at $B^*_{0}$, $R^*_{0}$, and $G^*_{0}$. We begin with $B^*_{i}$.

%%%%%%%%%%%%%%%%%%%%%%%%%%%%%%%%%%%%%%%%%%%%%%%%%%%%%%%%%%%%%%%%
\begin{theorem} \label{theo:resultforblue}
Let $B_0,R_0,G_0 \in \Delta(\mathcal{V})$. Then, the unique optimal boundary homogeneous $\varepsilon$-DP mechanism for the $(n,c)$-line with rainbow $c = (\texttt{blue, red, green})$ such that $B^*_{0}=B_{0}$, $R^*_{0}=R_{0}$, and $G^*_{0}=G_{0}$ satisfies 
\begin{align}
B^*_i =\begin{cases}
&e^{i\varepsilon}B_0 \qquad\qquad\qquad\quad\;\;\;\;\; \text{if} \quad 1\leq i \leq \tau_B \\%\label{50}\\
&1 \!-\! e^{(\tau_B-i)\varepsilon} \!+\! e^{(2\tau_B-i)\varepsilon} B_0 \;\;\; \text{if} \quad \tau_B< i< n \label{eq:Bistarresult}.%\label{51}
\end{cases}
\end{align}
\end{theorem}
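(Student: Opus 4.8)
The plan is to show that the optimal blue probabilities arise from propagating the two upper bounds of the uniqueness argument outward from the boundary vertex $i=0$, and then to solve the resulting scalar recursion in closed form. As in the proof of Theorem~\ref{teo: unique optimal}, each interior vertex $i$ inherits from its boundary-side neighbor $i-1$ two upper bounds on $B_i$: the multiplicative bound $B_i \le e^\varepsilon B_{i-1}$ (the shift of \eqref{eq:DPcondupperforBiplus1} to the pair $(i-1,i)$), and the additive bound $B_i \le 1 - e^{-\varepsilon} + e^{-\varepsilon}B_{i-1}$ obtained by substituting \eqref{eq:DPcondsumprob} and the reverse inequalities \eqref{eq:DPcondupperforRi} and \eqref{eq:DPcondupperforGi} (i.e. $R_i \ge e^{-\varepsilon}R_{i-1}$, $G_i \ge e^{-\varepsilon}G_{i-1}$). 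Both bounds are non-decreasing in $B_{i-1}$, so the greedy assignment
\[
B_i^* = \min\!\big(e^\varepsilon B_{i-1}^*,\; 1 - e^{-\varepsilon} + e^{-\varepsilon}B_{i-1}^*\big), \qquad B_0^* = B_0,
\]
simultaneously maximizes every $B_i$. I would first prove by induction on $i$ that this recursion upper-bounds $B_i$ for any feasible mechanism (the inductive step is immediate from monotonicity of the two bounds), which gives optimality, and separately exhibit a feasible completion attaining it; uniqueness is already supplied by Theorem~\ref{teo: unique optimal}.

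The second step is to resolve which of the two bounds is active. Comparing them shows $e^\varepsilon B_{i-1} \le 1 - e^{-\varepsilon} + e^{-\varepsilon}B_{i-1}$ exactly when $B_{i-1} \le \tfrac{1}{e^\varepsilon+1}$, so the multiplicative bound governs while $B_{i-1}$ stays below the threshold $\tfrac{1}{e^\varepsilon+1}$, and the additive bound governs afterwards. While multiplicative, $B_i^* = e^{i\varepsilon}B_0$, which remains below the threshold precisely for $i \le 1 - \tfrac{\ln(B_0(e^\varepsilon+1))}{\varepsilon}$; taking the floor, and the positive part to cover the case where $B_0$ already exceeds the threshold, yields exactly the index $\tau_B$ of \eqref{eq:deftauB}. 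I would then check that the regime switch is one-directional: since the affine map $x \mapsto 1 - e^{-\varepsilon} + e^{-\varepsilon}x$ is an increasing contraction toward its fixed point $1$, once $B_{i-1}^* > \tfrac{1}{e^\varepsilon+1}$ the sequence stays strictly between $\tfrac{1}{e^\varepsilon+1}$ and $1$ and keeps increasing, so the additive bound remains active for all $i > \tau_B$.

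For the third step I would solve the two recursions. On $1 \le i \le \tau_B$ the multiplicative recursion gives $B_i^* = e^{i\varepsilon}B_0$ directly. For $i > \tau_B$, writing the affine recursion as $B_i^* - 1 = e^{-\varepsilon}(B_{i-1}^* - 1)$ and unrolling from $B_{\tau_B}^* = e^{\tau_B\varepsilon}B_0$ produces $B_i^* = 1 - e^{(\tau_B - i)\varepsilon} + e^{(2\tau_B - i)\varepsilon}B_0$, matching \eqref{eq:Bistarresult}; as a sanity check the two pieces agree at the junction $i=\tau_B$, both giving $e^{\tau_B\varepsilon}B_0$. Finally, to confirm feasibility I would verify that this sequence meets the remaining blue constraints \eqref{eq:DPcondupperforBi} and \eqref{eq:DPcondupperforBiplus1}, which hold because the ratios of consecutive terms lie in $[e^{-\varepsilon},e^\varepsilon]$, and that $B_i^* < 1$ throughout, so that a nonnegative red/green completion exists; the latter follows from $B_{\tau_B}^* \le \tfrac{e^\varepsilon}{e^\varepsilon+1} < 1$ together with the contraction property.

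I expect the main obstacle to be the bookkeeping around the regime boundary rather than any deep step: one must argue cleanly that greedy forward propagation is globally (not merely locally) optimal, pin down $\tau_B$ with its floor and positive-part corrections so that the two branches of \eqref{eq:Bistarresult} meet consistently, and rule out any reversion from the additive to the multiplicative regime. The attainability/feasibility check is conceptually routine but must be stated, since the closed form is meaningful only once a genuine $\varepsilon$-DP mechanism is known to realize it; the explicit red/green completion is deferred to Theorem~\ref{theo:resultforred}.
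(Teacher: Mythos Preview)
Your proposal is correct and follows essentially the same route as the paper: both identify the two upper bounds $B_i \le e^\varepsilon B_{i-1}$ and $B_i \le 1 - e^{-\varepsilon}(1-B_{i-1})$, locate the crossover threshold $(e^\varepsilon+1)^{-1}$, translate it into the index $\tau_B$, and solve the two resulting recursions to obtain \eqref{eq:Bistarresult}. Your write-up is somewhat more careful about the one-directionality of the regime switch and the feasibility check, but the underlying argument is the same.
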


%%%%%%%%%%%%%%%%%%%%%%%%%%%%%%%%%%%%%%%%%
\begin{proof}
Since \texttt{blue} is the first element in the rainbow, it should be maximized first. There are two achievable upper bounds on $B_{i+1}$ imposed to satisfy the $\varepsilon$-DP constraints in (\ref{eq:DPcondallnonnegative})-(\ref{eq:DPcondupperforGiplus1}). The first one is characterized by (\ref{eq:DPcondupperforBiplus1}), and the second is characterized jointly by (\ref{eq:DPcondupperforRi})-(\ref{eq:DPcondsumprob}) such that 
\begin{align}
    B_{i+1} \leq 1 - e^{-\varepsilon} (R_i +G_i) = 1-e^{-\varepsilon}(1-B_i).\label{eq:secondoutputforBiplus1}
\end{align}
The two upper bounds in (\ref{eq:DPcondupperforBiplus1}) and (\ref{eq:secondoutputforBiplus1}) on $B_{i+1}$ are equal when $B_i$ is equal to the threshold
\begin{align}
\overline{T}\triangleq {(e^{\varepsilon}+1)}^{-1}\label{eq:Bithreshold}.
\end{align}
Furthermore, by (\ref{eq:DPcondupperforBiplus1}) we have
\begin{align}
    B^*_{i}= e^{\varepsilon} B_{i-1} \qquad\text{if}\qquad B_{i-1}\leq \overline{T}\label{eq:Biplus1maximumfirststage}
\end{align}
which can be represented equivalently as
\begin{align}
    B^*_{i}= e^{i\varepsilon} B_0\qquad \text{if}\qquad  B_{i-1}^*\leq \overline{T}\label{eq:Biplus1maximumfirststageequiv}
\end{align}
since $B^*_{i-1}$ is non-decreasing in $i$ if $B_{i-1}^*\leq \overline{T}$ and where 
\begin{align}
    B_{i-1}^* = e^{{(i-1)}\varepsilon}\cdot B_0 \leq \overline{T} \Longleftrightarrow i \leq -\frac{\ln\big(B_0(e^\varepsilon +1)\big)}{\varepsilon} + 1.\label{eq:BbartotauB}
\end{align}
Thus, from the definition of $\tau_B$ in (\ref{eq:deftauB}), the first case in (\ref{eq:Bistarresult}) holds. Next, by (\ref{eq:secondoutputforBiplus1}) we obtain
\begin{align}
    B^*_{i}  = 1-e^{-\varepsilon}(1-B_{i-1})\qquad\text{if}\qquad B_{i-1}> \overline{T}\label{eq:Biplus1maximumsecondstage}
\end{align}
which is equivalent to
\begin{align}
    B^*_{i}= 1-e^{-(i-\tau_B)\varepsilon}(1-B^*_{\tau_B})\qquad \text{if}\qquad  \tau_B<i\leq n-1\label{eq:Biplus1maximumsecondstageequiv}.
\end{align}
Inserting $B^*_{\tau_B}=e^{\tau_B\varepsilon}\cdot B_0$ into (\ref{eq:Biplus1maximumsecondstageequiv}), we have
\begin{align}
    B^*_{i}= 1\!-\!e^{-(i-\tau_B)\varepsilon}\!+\!e^{-(i-2\tau_B)\varepsilon}B_0\qquad \text{if}\quad  \tau_B\!< i\!\leq\! n\!-\!1
\end{align}
which proves the second case in (\ref{eq:Bistarresult}).
\end{proof}

We now give an expression for $R^*_{i}$.

%%%%%%%%%%%%%%%%%%%%%%%%%%%%%%%%%%%%%%%%%%%%%%%%%%%%%%%%%%%%%%%%%%%%%%%%%%%%%%%%%%%%%%%%%%%%%%%%%%%%%%%%%%%%%%%%%%%%%%%%%%%%%%%%%%
\begin{theorem} \label{theo:resultforred}
Let $B_0,R_0,G_0 \in \Delta(\mathcal{V})$. Then, the unique optimal boundary homogeneous $\varepsilon$-DP mechanism for the $(n,c)$-line with rainbow $c = (\texttt{blue, red, green})$ such that $B^*_{0}=B_{0}$, $R^*_{0}=R_{0}$, and $G^*_{0}=G_{0}$ satisfies 

\begin{equation}
R_i^* = 
    \begin{cases}
    e^{i\varepsilon} R_0  \qquad& \text{if} \quad 1\leq i \leq \tau_R\\
    R_{\text{Mid}}(i)  \qquad& \text{if} \quad \tau_R < i \leq \tau_B \\
    e^{-(i-\tau_B)\varepsilon} R_{\text{Mid}}(\tau_B) \qquad& \text{if} \quad \tau_B< i < n \label{eq:Ristarresult}
    \end{cases}
\end{equation}
where 
\begin{align}
&R_{\text{Mid}}(i) =1- e^{-(i-\tau_R)\varepsilon}- e^{i\varepsilon}B_0+e^{-(i-2\tau_R)\varepsilon}(B_0+R_0)\label{eq:RMidinewdef}.
\end{align}
\end{theorem}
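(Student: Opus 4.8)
The plan is to mirror the strategy used for \texttt{blue} in Theorem~\ref{theo:resultforblue}: having already fixed \texttt{blue} at its optimal values $B_i^*$, I maximize \texttt{red} greedily along the line, node by node, since \texttt{green} is then determined by $G_i = 1 - B_i^* - R_i$. With $B_i^*$ fixed, maximizing $R_i$ is the same as minimizing $G_i$, so the constraints that can bound $R_{i+1}$ from above are (a) the direct \texttt{red} constraint (\ref{eq:DPcondupperforRiplus1}), $R_{i+1}\le e^\varepsilon R_i$, and (b) the \texttt{green} constraint (\ref{eq:DPcondupperforGi}) rewritten through $G=1-B^*-R$, namely
\begin{align}
R_{i+1}\le 1-B_{i+1}^*-e^{-\varepsilon}\bigl(1-B_i^*-R_i\bigr).\nonumber
\end{align}
As in Theorem~\ref{teo: unique optimal}, both bounds are non-decreasing in $R_i$, so all $R_i$ can be maximized simultaneously; it suffices to set $R_{i+1}^*$ to the smaller of (a) and (b) at each step and verify the remaining lower-bound and non-negativity constraints hold automatically.

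First I would determine which of (a), (b) is binding. Substituting the first-regime form $B_i^*=e^{i\varepsilon}B_0$ together with the candidate $R_i=e^{i\varepsilon}R_0$ and simplifying the equality of (a) and (b), the exponential \texttt{red} bound (a) is the tighter one exactly while $e^{(i-1)\varepsilon}(B_0+R_0)\le (e^\varepsilon+1)^{-1}=\overline{T}$, which by (\ref{eq:deftauR}) is the range $1\le i\le\tau_R$; this gives the first regime $R_i^*=e^{i\varepsilon}R_0$. Note $\tau_R\le\tau_B$ because $B_0+R_0\ge B_0$, so the three regimes are correctly ordered.

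For $i>\tau_R$ the \texttt{green}-induced bound (b) is binding. On the middle range $\tau_R<i\le\tau_B$, \texttt{blue} is still in its exponential phase $B_i^*=e^{i\varepsilon}B_0$, so (b) becomes the first-order linear recurrence $R_{i+1}=e^{-\varepsilon}R_i+\bigl(1-e^{-\varepsilon}-e^{(i+1)\varepsilon}B_0+e^{(i-1)\varepsilon}B_0\bigr)$; I would verify by direct substitution that $R_{\text{Mid}}(i)$ in (\ref{eq:RMidinewdef}) solves it and meets the first regime at the seam, $R_{\text{Mid}}(\tau_R)=e^{\tau_R\varepsilon}R_0$, which pins the constant of integration. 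Once $i>\tau_B$, \texttt{blue} enters its saturating phase where $1-B_{i+1}^*=e^{-\varepsilon}(1-B_i^*)$; substituting this into (b) collapses it to $R_{i+1}\le e^{-\varepsilon}R_i$, so the optimal \texttt{red} simply decays geometrically, $R_i^*=e^{-(i-\tau_B)\varepsilon}R_{\text{Mid}}(\tau_B)$, matching the third case. A useful consistency check is that the merged mass $P_i=B_i^*+R_i^*$ equals, throughout \emph{both} the second and third regimes, the binary two-color optimum of Theorem~\ref{theo:resultforblue} with boundary value $B_0+R_0$ and threshold $\tau_R$; the two regimes differ only in how this $P_i$ is split between \texttt{blue} and \texttt{red}.

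I expect the main obstacle to be the transition at $\tau_B$: one must recognize that the single \texttt{green}-induced bound (b) changes character---from the $R_{\text{Mid}}$ recurrence to pure decay---\emph{solely} because $B_i^*$ switches formula there, while (b) itself remains the binding constraint across the seam. Verifying that (b) and not (a) stays binding for all $i>\tau_R$, checking continuity of the three pieces at $i=\tau_R$ and $i=\tau_B$, and confirming that the omitted lower bounds together with the non-negativity of \texttt{green} never become active are the routine but bookkeeping-heavy parts; the degenerate cases $\tau_R=0$ or $\tau_B=0$ are absorbed by the floor and the $\{\cdot\}^{+}$ in (\ref{eq:deftauB})--(\ref{eq:deftauR}).
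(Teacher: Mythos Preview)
Your proposal is correct and follows essentially the same route as the paper: identify the direct \texttt{red} bound and the \texttt{green}-induced bound on $R_{i+1}$, show the former binds precisely while $B_{i-1}^*+R_{i-1}^*\le\overline{T}$ (yielding the threshold $\tau_R$), then let the latter take over, with the further split at $\tau_B$ arising from the change in the formula for $B_i^*$. The only presentational difference is in the third regime, where the paper argues that maximizing \texttt{blue} forces $R_i$ down to its DP minimum $e^{-\varepsilon}R_{i-1}^*$, whereas you reach the same value from the dual direction by noting that bound (b) itself collapses to $R_{i+1}\le e^{-\varepsilon}R_i$ once $1-B_{i+1}^*=e^{-\varepsilon}(1-B_i^*)$; your merged-mass check $B_i^*+R_i^*$ against the two-color optimum is a nice extra not in the paper.
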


%%%%%%%%%%%%%%%%%%%%%%%%%%%%%%%%%%%%%%%%%%
\begin{proof}
Since \texttt{red} is the second color that should be maximized after \texttt{blue} is maximized, we analyze $R_i^{*}$ separately for the two index regimes given in (\ref{eq:Bistarresult}) that are characterized by $\tau_B\geq 0$.

First, for $1\leq i \leq \tau_B$, there are two achievable upper bounds on $R_{i+1}$ that satisfy the $\varepsilon$-DP constraints. We first have the upper bound in (\ref{eq:DPcondupperforRiplus1}), and a second upper bound follows from (\ref{eq:DPcondupperforGi}), (\ref{eq:DPcondsumprob}), (\ref{eq:Biplus1maximumfirststage}), and (\ref{eq:BbartotauB}), namely $B_{i+1}^*=e^{\varepsilon}B_i^*$ for $1\leq i \leq \tau_B-1$, such that we obtain
\begin{align}
    R_{i+1}\leq 1 -e^{\varepsilon}B_i^* - e^{-\varepsilon}(1-B_i^*-R_i).\label{eq:secondoutputforRiplus1}
\end{align}
The two upper bounds in (\ref{eq:DPcondupperforRiplus1}) and (\ref{eq:secondoutputforRiplus1}) are equal if the sum $(B_i^*+R_i)$ is equal to the threshold given in (\ref{eq:Bithreshold}). Furthermore, it follows from (\ref{eq:DPcondupperforRiplus1}) that $R^*_{i}= e^{i\varepsilon} R_{0}$ if $(B^{*}_{i-1}\!\!~+~\!\!R^*_{i-1})\leq \overline{T}$, since $(B^*_{i-1}\!\!~+~\!\!R^*_{i-1})$ is non-decreasing in $i$ if $(B^{*}_{i-1}\!\!~+~\!\!R^*_{i-1})\leq \overline{T}$. Similar to (\ref{eq:BbartotauB}), using (\ref{eq:Biplus1maximumfirststage}) and (\ref{eq:Biplus1maximumfirststageequiv}) we have $(B^{*}_{i-1}\!\!~+~\!\!R^*_{i-1})~\leq~ \overline{T}$ if and only if
\begin{align}
    i \leq -\frac{\ln\big((B_0+R_0)(e^\varepsilon +1)\big)}{\varepsilon} + 1.\label{eq:TbartotauR}
\end{align}
By defining the new index threshold $\tau_R$ as in (\ref{eq:deftauR}), the first case in (\ref{eq:Ristarresult}), i.e., $1\leq i\leq \tau_R$ case, is proved. Next, by (\ref{eq:Biplus1maximumfirststageequiv}) and (\ref{eq:secondoutputforRiplus1}) we obtain for $\tau_R< i \leq\tau_B$ that
\begin{align}
    R_{i}^* &= 1 - e^{i\varepsilon}B_0 - e^{-(i-\tau_R)\varepsilon}(1 - (B^*_{\tau_R} +R^*_{\tau_R}))\nonumber\\
&= 1 - e^{i\varepsilon}B_0 - e^{-(i-\tau_R)\varepsilon}(1 - e^{\tau_R\varepsilon}(B_0 +R_0))\nonumber\\
&=1- e^{-(i-\tau_R)\varepsilon}- e^{i\varepsilon}B_0+e^{-(i-2\tau_R)\varepsilon}(B_0+R_0)\nonumber\\
&= R_{\text{Mid}}(i)
\end{align}
where $R_{\text{Mid}}(i)$ is as defined in (\ref{eq:RMidinewdef}).

Second, for $\tau_B< i \leq n-1$, there is no achievable upper bound on $R_{i+1}$ that satisfies the $\varepsilon$-DP constraints since by (\ref{eq:secondoutputforBiplus1}) and (\ref{eq:Biplus1maximumsecondstage}) we have
\begin{align*}
    B_i^* = 1-e^{-\varepsilon}(R_{i-1}^*+G_{i-1})\qquad \text{if}\qquad  \tau_B<i\leq n-1
\end{align*}
and by (\ref{eq:DPcondsumprob})
\begin{align}
    B_i^* = 1- (R_i^*+G_i)\qquad\text{for all}\quad 1\leq i\leq n-1.
\end{align}
Thus, to maximize the probability $B_i^*$ of the first color \texttt{blue} for $\tau_B< i \leq n-1$ in which it cannot grow exponentially, one should minimize both $R_i$ and $G_i$ such that $\varepsilon$-DP constraints are satisfied since this is equivalent to minimizing $(R_i+G_i)$. By (\ref{eq:DPcondupperforRi}), we obtain
\begin{align}
    R_i^* = e^{-\varepsilon}R_{i-1}^*\qquad\text{if}\qquad \tau_B< i\leq n-1
\end{align}
which can be expressed equivalently as
\begin{align}
    R_i^* \!=\! e^{-(i-\tau_B)\varepsilon}R_{\tau_B}^*\!=\!e^{-(i-\tau_B)\varepsilon}R_{\text{Mid}}(\tau_B)
\end{align}
if $\tau_B < i\leq (n-1)$, which proves the last case in (\ref{eq:Ristarresult}).
\end{proof}

The expression for $G^*_{i}$ follows directly from the previous theorems together with the probability constraint.

%%%%%%%%%%%%%%%%%%%%%%%%%%%%%%%%%%%%
\begin{corollary}\label{cor:resultforgreen}
Let $B_0,R_0,G_0 \in \Delta(\mathcal{V})$. Then, the unique optimal boundary homogeneous $\varepsilon$-DP mechanism for the $(n,c)$-line with rainbow $c = (\texttt{blue, red, green})$ such that $B^*_{0}~=~B_{0}$, $R^*_{0}=R_{0}$, and $G^*_{0}=G_{0}$ satisfies 
\begin{align}
    G_i^* = \begin{cases}
     1 - e^{i\varepsilon}(B_0+R_0) & \text{if }\; 1\leq i \leq \tau_R\\
    e^{-(i-\tau_R)\varepsilon}\cdot(1\!-\!e^{\tau_R\varepsilon}(B_0+R_0)) & \text{if }\; \tau_R < i< n.\label{eq:Gistarresult}
  \end{cases}
\end{align}
\end{corollary}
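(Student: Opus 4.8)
The plan is to invoke the probability constraint (\ref{eq:DPcondsumprob}), which gives $G_i^* = 1 - B_i^* - R_i^*$, and then substitute the closed forms from Theorems~\ref{theo:resultforblue} and \ref{theo:resultforred}. The only structural subtlety is aligning the regime boundaries: since $B_0 + R_0 \geq B_0$, the logarithm in (\ref{eq:deftauR}) is at least as negative as the one in (\ref{eq:deftauB}), hence $\tau_R \leq \tau_B$. The regimes of $B_i^*$ split at $\tau_B$, while those of $R_i^*$ split at both $\tau_R$ and $\tau_B$, so I would organize the computation into the three index intervals $1 \leq i \leq \tau_R$, $\tau_R < i \leq \tau_B$, and $\tau_B < i < n$, and then observe that the last two collapse into a single formula.

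First, for $1 \leq i \leq \tau_R \leq \tau_B$, both colors are in their exponential-growth regime, so $B_i^* = e^{i\varepsilon}B_0$ and $R_i^* = e^{i\varepsilon}R_0$. Substituting immediately yields $G_i^* = 1 - e^{i\varepsilon}(B_0 + R_0)$, which is the first case of (\ref{eq:Gistarresult}).

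Next, for $\tau_R < i \leq \tau_B$, I would use $B_i^* = e^{i\varepsilon}B_0$ together with $R_i^* = R_{\text{Mid}}(i)$ from (\ref{eq:RMidinewdef}). The term $e^{i\varepsilon}B_0$ cancels against the matching term in $R_{\text{Mid}}(i)$, leaving $G_i^* = e^{-(i-\tau_R)\varepsilon} - e^{-(i-2\tau_R)\varepsilon}(B_0 + R_0)$, which factors as $e^{-(i-\tau_R)\varepsilon}(1 - e^{\tau_R\varepsilon}(B_0 + R_0))$ using $e^{-(i-2\tau_R)\varepsilon} = e^{-(i-\tau_R)\varepsilon}e^{\tau_R\varepsilon}$. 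For the final interval $\tau_B < i < n$, I would substitute the saturated form of $B_i^*$ from (\ref{eq:Bistarresult}) together with $R_i^* = e^{-(i-\tau_B)\varepsilon}R_{\text{Mid}}(\tau_B)$, expand $R_{\text{Mid}}(\tau_B)$ via (\ref{eq:RMidinewdef}), and verify that the $e^{(\tau_B-i)\varepsilon}$ and $e^{(2\tau_B-i)\varepsilon}B_0$ contributions cancel, again collapsing to $e^{-(i-\tau_R)\varepsilon}(1 - e^{\tau_R\varepsilon}(B_0 + R_0))$.

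The main obstacle is not conceptual but lies in this last cancellation: one must confirm that the two distinct $(B,R)$-regimes on $(\tau_R, n)$ produce a single closed form for $G$, so that the two-case statement of the corollary is valid. Concretely, this amounts to checking that the dependence on $\tau_B$ drops out entirely after combining $1 - B_i^*$ with $R_i^* = e^{-(i-\tau_B)\varepsilon}R_{\text{Mid}}(\tau_B)$, which it does once $R_{\text{Mid}}(\tau_B)$ is expanded. Since Theorem~\ref{teo: unique optimal} already guarantees uniqueness of the optimal boundary homogeneous mechanism, these substitutions determine $G_i^*$ completely, and no separate optimality argument for \texttt{green} is needed.
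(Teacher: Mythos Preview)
Your proposal is correct and is exactly the approach the paper takes: it states only that the expression follows directly from Theorems~\ref{theo:resultforblue} and \ref{theo:resultforred} together with the probability constraint (\ref{eq:DPcondsumprob}), without spelling out the algebra. Your three-regime computation and the verification that the $\tau_B$-dependence cancels on $(\tau_R,n)$ are precisely what is needed to turn that remark into a full argument.
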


As another corollary, we recover the expressions for two colors given in \cite{RafnoDP2colorPaper}.

\begin{corollary}
Theorems~\ref{theo:resultforblue} and \ref{theo:resultforred} generalize \cite[Theorem~4]{RafnoDP2colorPaper} for $\delta=0$.
\end{corollary}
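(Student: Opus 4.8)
The plan is to specialize the ternary formulas to a binary output space by removing the lowest-priority color. Concretely, I would set $G_0 = 0$, equivalently $B_0 + R_0 = 1$, which models the situation where \texttt{green} is never available and the rainbow collapses to the two-color ordering $(\texttt{blue, red})$ of \cite{RafnoDP2colorPaper}. The goal is then to show that, under this substitution, the expressions in Theorems~\ref{theo:resultforblue} and \ref{theo:resultforred} together with Corollary~\ref{cor:resultforgreen} reduce to $G_i^* = 0$ for all $i$, to $R_i^* = 1 - B_i^*$, and to the two-regime formula for $B_i^*$, which is exactly the $\delta = 0$ content of \cite[Theorem~4]{RafnoDP2colorPaper}.

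The first step is to evaluate the thresholds under $B_0 + R_0 = 1$. The threshold $\tau_B$ in \eqref{eq:deftauB} is unchanged, while $\tau_R$ in \eqref{eq:deftauR} becomes $\lfloor\{-\ln(e^\varepsilon+1)/\varepsilon + 1\}^+\rfloor$. Since $e^\varepsilon + 1 > e^\varepsilon$ gives $\ln(e^\varepsilon+1) > \varepsilon$, the bracketed quantity is negative, so $\{\cdot\}^+ = 0$ and hence $\tau_R = 0$. This is the key simplification: with no \texttt{green} slack to absorb, the second color \texttt{red} never enjoys an exponential-growth regime, and the three regimes of Theorem~\ref{theo:resultforred} collapse.

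Next I would substitute $\tau_R = 0$ and $B_0 + R_0 = 1$ into the piecewise expressions. In Corollary~\ref{cor:resultforgreen} the first branch becomes empty and the second yields $G_i^* = e^{-i\varepsilon}(1 - (B_0 + R_0)) = 0$, so \texttt{green} vanishes identically. In Theorem~\ref{theo:resultforred}, the first branch is empty, the middle function in \eqref{eq:RMidinewdef} simplifies to $R_{\mathrm{Mid}}(i) = 1 - e^{i\varepsilon}B_0 = 1 - B_i^*$ on $1 \le i \le \tau_B$, and the third branch becomes $R_i^* = e^{-(i-\tau_B)\varepsilon}(1 - B_{\tau_B}^*) = 1 - B_i^*$ upon inserting the second branch of \eqref{eq:Bistarresult}. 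Thus $R_i^* = 1 - B_i^*$ throughout, while $B_i^*$ is already furnished, unaltered, by Theorem~\ref{theo:resultforblue}.

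Finally I would match these reduced expressions against \cite[Theorem~4]{RafnoDP2colorPaper} specialized to $\delta = 0$, checking that the index threshold $\tau_B$, the exponential regime $B_i^* = e^{i\varepsilon}B_0$, and the saturation regime $B_i^* = 1 - e^{-(i-\tau_B)\varepsilon}(1 - B_{\tau_B}^*)$ coincide with the binary statement. I expect the main obstacle to be bookkeeping rather than mathematics: confirming that fixing $G_0 = 0$ genuinely collapses the ternary optimization onto the binary one, i.e.\ that the optimal mechanism keeps \texttt{green} inactive for every $i$ and never reactivates it, and then reconciling any difference in the parametrization of the threshold and of the two regimes in \cite{RafnoDP2colorPaper} with the notation used here.
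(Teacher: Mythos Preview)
Your proposal is correct and follows essentially the same route as the paper: set $B_0+R_0=1$, observe that this forces $\tau_R=0$, simplify $R_{\mathrm{Mid}}(i)$ to $1-e^{i\varepsilon}B_0$, and match the resulting two-regime formulas against the binary result of \cite[Theorem~4]{RafnoDP2colorPaper} with $\delta=0$. Your write-up is in fact more thorough than the paper's, as you additionally verify $G_i^*\equiv 0$ via Corollary~\ref{cor:resultforgreen} and explicitly check $R_i^*=1-B_i^*$ in the $i>\tau_B$ regime.
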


\begin{proof}
For two colors we have $B_0 + R_0 = 1$, so we obtain $\tau_R ~=~ 0$ since the output of the $\{\cdot\}^{+}$ operation in (\ref{eq:deftauR}) is $0$ for this case. Furthermore, for the 2-color problem in (\ref{eq:Ristarresult}) we obtain $R_{\text{Mid}}(i) = 1 - e^{i\varepsilon}B_0$ and $R_{\text{Ter}}(i) = e^{-(i-\tau_B)\varepsilon}R_{\text{Mid}}(\tau_B)$, which recovers the 2-color problem results in \cite[Theorem~4]{RafnoDP2colorPaper} when $\delta=0$ is imposed to provide $\varepsilon$-DP. 
\end{proof}

\begin{figure*}[!t]
\centering
\begin{subfigure}[t]{0.43\textwidth}
    \centering
     \includegraphics[width=0.95\textwidth, height=0.4\textheight,keepaspectratio]{./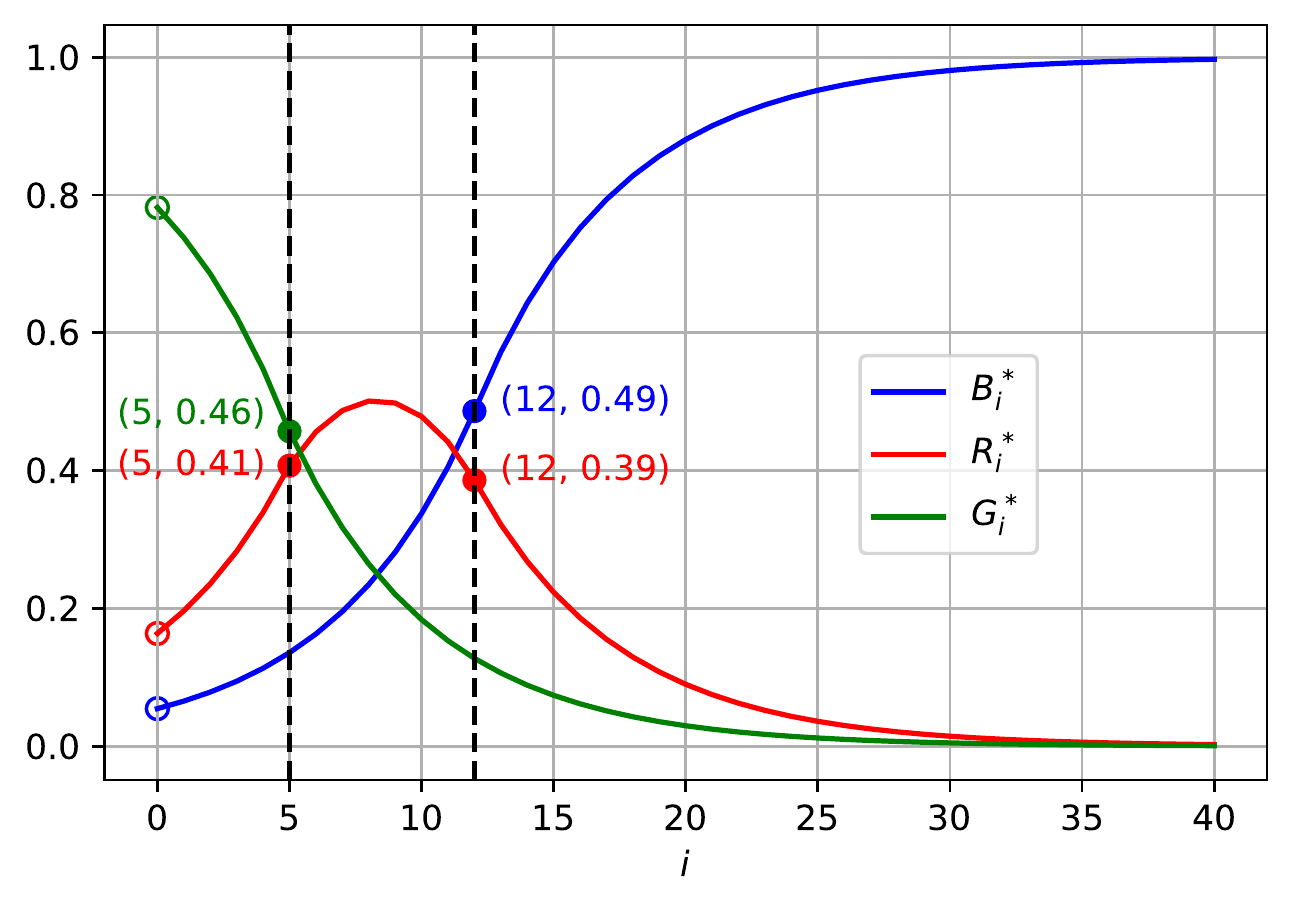}
     \vspace{-0.2cm}
     \caption{%
     Optimal $\varepsilon$-DP probabilities of being \texttt{blue, red}, and \texttt{green} vs. the distance $i$ to the \texttt{blue} boundary, given $(B_0,R_0,G_0)$=$(0.0545,0.1636,0.7818)$ and $\varepsilon=0.1823$.}
     \label{fig:Example1plot}
\end{subfigure} \quad
\begin{subfigure}[t]{0.43\textwidth}
    \centering
     \includegraphics[width=0.95\textwidth, height=0.4\textheight,keepaspectratio]{./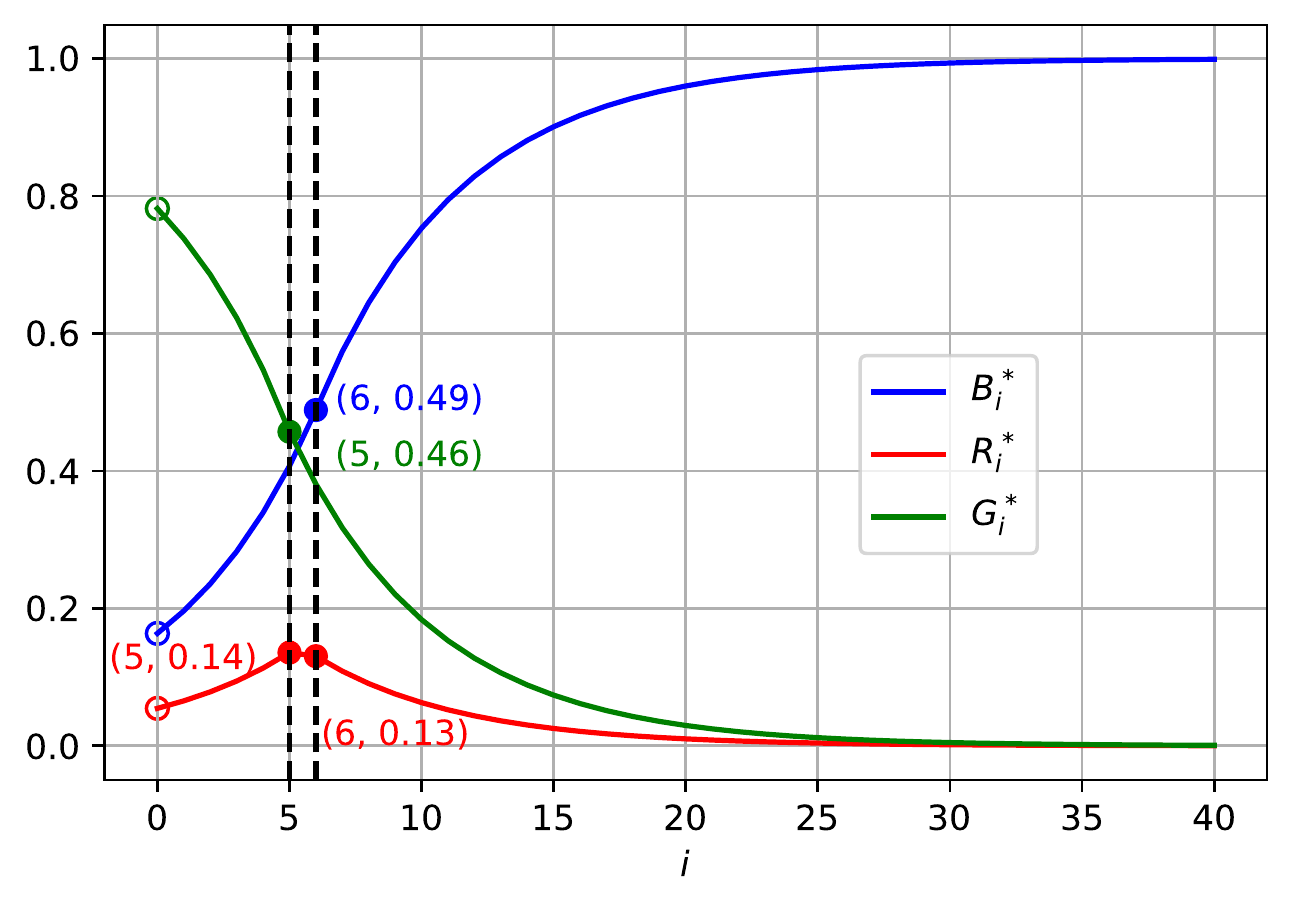}
     \vspace{-0.2cm}
     \caption{%
     Optimal $\varepsilon$-DP probabilities of being \texttt{blue, red}, and \texttt{green} vs. the distance $i$ to the \texttt{blue} boundary, given $(B_0,R_0,G_0)$=$(0.1636,0.0545,0.7818)$ and $\varepsilon=0.1823$.}
     \label{fig:Example2plot}
\end{subfigure}
\caption{Two examples that illustrate behaviors of color probabilities. The first preferred value \texttt{blue} has up to two operating regimes that are characterized by its boundary probability and $\varepsilon$. The second preferred value \texttt{red} can exhibit up to three operating regimes that are characterized additionally by the sum of boundary probabilities of \texttt{blue} and \texttt{red}.}
\label{fig: examples}
\vspace{-0.45cm}
\end{figure*} 
 %%%%%%%%%%%%%%%%%%%%%%%%%%%%%%%%%%%%%%%%%%%%
%%%%%%%%%%%%%%%%%%%%%%%%%%%%%%%%%%%%%%%%%%%%%%%%%%%%%%%%%%%%%%%%%%%%%%%%%%%%%%%%%%%%%%%%%%%%%%
\section{Three Color Rainbow DP Examples}
Given $(B_0,R_0,G_0,\varepsilon)$, the unique optimal boundary homogeneous $\varepsilon$-DP mechanism for the rainbow $c = (\texttt{blue, red, green})$ is characterized by (\ref{eq:Bistarresult}), (\ref{eq:Ristarresult}), and (\ref{eq:Gistarresult}), as proved above. We now illustrate the behavior of the $(B^*_i,R^*_i,G^*_i)$ for $1\leq i\leq (n-1)$ for different boundary probabilities and DP parameters.

\begin{example}\label{ex:Example1}
 Suppose $(B_0,R_0,G_0)=(0.0545,0.1636,0.7818)$, where the order $B_0<R_0<G_0$ is the inverse of the color priority order for the rainbow $c = (\texttt{blue, red, green})$, and $\varepsilon = 0.1823$ such that $\tau_R = 5$ and $\tau_B= 12$. We plot the results of the corresponding unique optimal $\varepsilon$-DP mechanism in the rainbow $c = (\texttt{blue, red, green})$ for $i=0,1,\ldots,(n-1)$ in Fig.~\ref{fig:Example1plot}, where the optimal probabilities $(B_i^{*},R_i^{*},G_i^{*})$ for each vertex, and index thresholds $\tau_R$ and $\tau_B$ are depicted. We remark that one can choose $n$ as large as possible since the probabilities and the thresholds do not depend on the number $n$ of vertices in the rainbow $c = (\texttt{blue, red, green})$.

 For $1\leq\ i\leq \tau_R$, both \texttt{blue} and \texttt{red} increase exponentially. Then, for $\tau_R < i\leq \tau_B$, \texttt{blue} can continue increasing exponentially, whereas \texttt{red} first increases but then decreases. Finally, for $\tau_B < i\leq (n-1)$, \texttt{red} and \texttt{green} decrease exponentially, while \texttt{blue} increases slower than exponentially. 
 
 In the index range $0\leq i\leq 5$, the initial order $B_0<R_0~<~G_0$ is preserved, and in $ 5< i\leq 8$ the order is $B_i\leq G_i \leq R_i$, then in $8 < i\leq 11$ we have $G_i\leq B_i \leq R_i$, and finally in $11\!\!~ <~\!\! i\!\!~\leq~\!\! (n-1)$ the priority order $G_i\leq R_i\leq B_i$ in the rainbow $c = (\texttt{blue, red, green})$ is achieved and preserved.
\end{example}

%%%%%%%%%%%%%%%%%%%%%%%%%%%%%%%%%%%%%%%%%%%%%%%%%%%%%%
\begin{example}\label{ex:Example2}
 Suppose next $(B_0,R_0,G_0)=(0.1636,0.0545,0.7818)$, where the order is $R_0<B_0<G_0$, and $\varepsilon = 0.1823$ such that $\tau_R =5$ and $\tau_B=6$. We plot the results of the corresponding unique optimal $\varepsilon$-DP mechanism in the rainbow $c = (\texttt{blue, red, green})$ in Fig.~\ref{fig:Example2plot}. Similar curve patterns to patterns in Fig.~\ref{fig:Example1plot} are observed in Fig.~\ref{fig:Example2plot} for the three colors within the index regions $1\leq\ i\leq \tau_R$, $\tau_R<\ i\leq \tau_B$, and $\tau_B< i\leq (n-1)$, except that the \texttt{red} curve does not increase in the second index range. Furthermore, we observe that between the index ranges $0\leq i\leq 5$ the order $R_0<B_0<G_0$ is preserved, and then in $5 < i\leq (n-1)$ the order $R_i\leq G_i\leq B_i$ in the rainbow $c = (\texttt{blue, red, green})$ is achieved and preserved.
\end{example}
%%%%%%%%%%%%%%%%%%%%%%%%%%%%%%%%%%%%%%%%%%%%%%%%%%%% 
\section*{Acknowledgment}
Authors thank Yuzhou Gu for his suggestions to improve Theorem 2. This work has been supported in part by the German Federal Ministry of Education and Research (BMBF) under the Grant 16KIS1004 and the ARC Future Fellowship FT190100429.

\IEEEtriggeratref{8}
\bibliographystyle{IEEEtran}
\bibliography{DPreferencesISIT2022}

% Generated by IEEEtran.bst, version: 1.14 (2015/08/26)
\begin{thebibliography}{10}
\providecommand{\url}[1]{#1}
\csname url@samestyle\endcsname
\providecommand{\newblock}{\relax}
\providecommand{\bibinfo}[2]{#2}
\providecommand{\BIBentrySTDinterwordspacing}{\spaceskip=0pt\relax}
\providecommand{\BIBentryALTinterwordstretchfactor}{4}
\providecommand{\BIBentryALTinterwordspacing}{\spaceskip=\fontdimen2\font plus
\BIBentryALTinterwordstretchfactor\fontdimen3\font minus
  \fontdimen4\font\relax}
\providecommand{\BIBforeignlanguage}[2]{{%
\expandafter\ifx\csname l@#1\endcsname\relax
\typeout{** WARNING: IEEEtran.bst: No hyphenation pattern has been}%
\typeout{** loaded for the language `#1'. Using the pattern for}%
\typeout{** the default language instead.}%
\else
\language=\csname l@#1\endcsname
\fi
#2}}
\providecommand{\BIBdecl}{\relax}
\BIBdecl

\bibitem{dwork2006calibrating}
C.~Dwork, F.~McSherry, K.~Nissim, and A.~Smith, ``Calibrating noise to
  sensitivity in private data analysis,'' in \emph{Proc. Theory Cryptography
  Conf.}, New York, NY, Mar. 2006, pp. 265--284.

\bibitem{dworkoriginal2006}
C.~Dwork, ``Differential privacy,'' in \emph{Proc. Int. Colloq. Automata Lang.
  Program.}, Venice, Italy, July 2006, pp. 1--12.

\bibitem{computationalDP}
I.~Mironov, O.~Pandey, O.~Reingold, and S.~Vadhan, ``Computational differential
  privacy,'' in \emph{Proc. Int. Cryptology Conf.}, Santa Barbara, CA, Aug.
  2009, pp. 126--142.

\bibitem{survey_2017}
T.~Zhu, G.~Li, W.~Zhou, and P.~S. Yu, ``Differentially private data publishing
  and analysis: {A} survey,'' \emph{{IEEE} Trans. Knowl. Data Eng.}, vol.~29,
  no.~8, pp. 1619--1638, Aug. 2017.

\bibitem{DworkBook}
C.~Dwork and A.~Roth, \emph{The Algorithmic Foundations of Differential
  Privacy}.\hskip 1em plus 0.5em minus 0.4em\relax Now Publishers Inc.:
  {H}anover, MA, Aug. 2014, vol.~9, no. 3-4.

\bibitem{uscensus}
\BIBentryALTinterwordspacing
\emph{Disclosure Avoidance and the 2020 Census}, 2020. [Online]. Available:
  \url{www.census.gov/about/policies/privacy/statistical_safeguards/disclosure-avoidance-2020-census.html}
\BIBentrySTDinterwordspacing

\bibitem{google}
{\'U}.~Erlingsson, V.~Pihur, and A.~Korolova, ``Rappor: {R}andomized
  aggregatable privacy-preserving ordinal response,'' in \emph{Proc. {ACM
  SIGSAC} Conf. Comput. Commun. Security}, New York, NY, Nov. 2014, pp.
  1054--1067.

\bibitem{apple}
\BIBentryALTinterwordspacing
D.~P. Team, \emph{Learning with Privacy at Scale}, 2017 (last accessed May
  2021). [Online]. Available:
  \url{machinelearning.apple.com/2017/12/06/learning-with-privacy-at-scale.html}
\BIBentrySTDinterwordspacing

\bibitem{microsoft}
B.~Ding, J.~Kulkarni, and S.~Yekhanin, ``Collecting telemetry data privately,''
  Dec. 2017, [Online]. Available: arxiv.org/abs/1712.01524.

\bibitem{individual}
J.~Soria-Comas, J.~Domingo-Ferrer, D.~Sánchez, and D.~Megías, ``Individual
  differential privacy: {A} utility-preserving formulation of differential
  privacy guarantees,'' \emph{{IEEE} Trans. Inf. Forensics Security}, vol.~12,
  no.~6, pp. 1418--1429, June 2017.

\bibitem{nissim2007smooth}
K.~Nissim, S.~Raskhodnikova, and A.~Smith, ``Smooth sensitivity and sampling in
  private data analysis,'' in \emph{Proc. {ACM} Symp. Theory Comput.}, San
  Diego, CA, June 2007, pp. 75--84.

\bibitem{blowfish}
X.~He, A.~Machanavajjhala, and B.~Ding, ``Blowfish privacy: {T}uning
  privacy-utility trade-offs using policies,'' in \emph{Proc. {ACM SIGMOD} Int.
  Conf. Management Data}, Snowbird, UT, 2014, pp. 1447--1458.

\bibitem{profile}
J.~Geumlek and K.~Chaudhuri, ``Profile-based privacy for locally private
  computations,'' in \emph{Proc. {IEEE} Int. Symp. Inf. Theory}, Paris, France,
  July 2019, pp. 537--541.

\bibitem{RafnoDP2colorPaper}
R.~G.~L. D'Oliveira, M.~M{\'e}dard, and P.~Sadeghi, ``Differential privacy for
  binary functions via randomized graph colorings,'' in \emph{Proc. {IEEE} Int.
  Symp. Inf. Theory}, Melbourne, Victoria, Australia, July 2021, pp. 473--478.

\bibitem{BinaryOutputDP}
N.~Holohan, D.~J. Leith, and O.~Mason, ``Optimal differentially private
  mechanisms for randomised response,'' \emph{{IEEE} Trans. Inf. Forensics
  Security}, vol.~12, no.~11, pp. 2726--2735, Nov. 2017.

\end{thebibliography}

\end{document}